\newtheorem{theorem}{Theorem}
\newtheorem{lemma}{Lemma}
\newtheorem{definition}{Definition}
\newtheorem{remark}{Remark}
\newtheorem{example}{Example}
\newtheorem{claim}{Claim}
\newcommand{\comment}[1]{}
\def\cX{\mbox{$\cal{X}$}}
\def\cY{\mbox{$\cal{Y}$}}
\def\cZ{\mbox{$\cal{Z}$}}
\def\cV{\mbox{$\cal{V}$}}
\def\cE{\mbox{$\cal{E}$}}
\def\cV{\mbox{$\cal{V}$}}
\def\cG{\mbox{$\cal{G}$}}
\title{Broadcast Function Computation with Complementary Side Information  }
\author{
\IEEEauthorblockN{Jithin~Ravi and Bikash~Kumar~Dey}
\IEEEauthorblockA{Department of Electrical Engineering \\Indian Institute of Technology Bombay \\
        {\tt \{rjithin,bikash\}@ee.iitb.ac.in}
}
}
\begin{document}
\maketitle
 \begin{abstract}
 We consider the function computation problem in a three node network
 with one encoder and two decoders. The encoder has access to 
 two correlated sources $X$ and $Y$.  The encoder encodes $X^n$ and $Y^n$
into a message which is given to two decoders. 
Decoder 1 and decoder 2 have access to $X$ and $Y$ respectively, and they want to compute
 two functions $f(X,Y)$ and $g(X,Y)$ respectively using
 the encoded message and their respective side information. 
 We want to find the optimum (minimum) encoding rate under the zero error and 
$\epsilon$-error (i.e. vanishing error) criteria.
 For the special case of this problem with $f(X,Y) = Y$ and $g(X,Y) = X$, 
 we show that the $\epsilon$-error optimum rate is also achievable with 
zero error.
This result extends to a more general `complementary delivery index
coding' problem with arbitrary number of messages and decoders.
 For other functions, we show that the cut-set bound is 
 achievable under $\epsilon$-error if $X$ and $Y$ are binary, or if
the functions are from a special class of `compatible' functions which
includes the case $f=g$.

\end{abstract}

\section{Introduction}

We consider {\em the broadcast function network} with
complementary side information as shown in Fig.~\ref{Broadcast_network}.
Here, $(X_i,Y_i)$ is an i.i.d. discrete random process with an
underlying probability mass function $p_{XY}(x,y)$.
An encoder encodes $X^n$ and $Y^n$ into a message, which is
given to two decoders. 
Decoder 1 and decoder 2 have side information $X$ and $Y$ respectively, and
want to compute $Z_1=f(X,Y)$ and $Z_2=g(X,Y)$ respectively.
We study this problem under $\epsilon$-error and zero error criteria.
We are interested in finding the optimum broadcast rate in both cases.

\begin{figure}[h]
\centering
\includegraphics[width=\columnwidth]{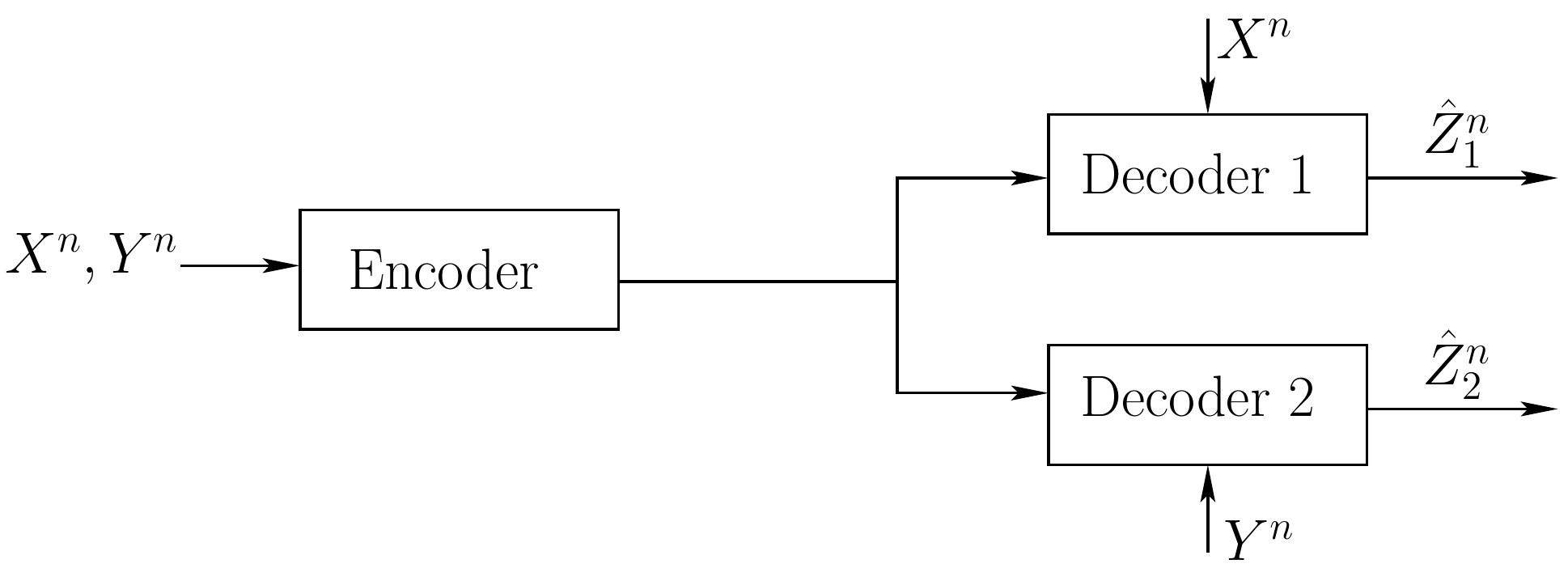}
\caption{Function computation in broadcast function network with complementary side information. Here $Z_1=f(X,Y)$, $Z_2=g(X,Y)$.}
\label{Broadcast_network}
\end{figure}

We first consider a special case of the problem with
$Z_1=Y$ and $Z_2=X$, known as the complementary delivery problem.
This special case 
is an instance of index coding problem with two messages.
This problem has been addressed under noisy broadcast channel in \cite{Tuncel_2006,Wu_2007,Kramer_2007}
for $\epsilon$-error recovery of the messages.
In contrast to their model of independent messages, we consider
correlated messages over a noiseless broadcast channel.
Lossy version of this problem was studied in \cite{Kimura_2007, Timo_2013}. 
For the lossless case, the optimal $\epsilon$-error rate
can be shown to be
$\max\{ H(Y|X), H(X|Y) \}$ using the Slepian-Wolf result.
 We show that this rate is also achievable with zero error.
We then extend this to $n$ random variables 
which can also be considered as a special 
case of the index coding problem. Here,
the server has messages $X_1, \ldots, X_K$ and there are 
$m$ receivers. Each receiver has a subset of $\{ X_1, \ldots, X_K \}$ 
as side information, and all the receivers want 
to recover all the random variables that it does not have access to.
We call this setup as {\em complementary delivery index coding} problem.
Cut-set bound in this case can be shown to be achievable
for $\epsilon$-error using the Slepian-Wolf result.
We show that this rate is also achievable with zero error.

Next we address the function computation problem shown in 
Fig.~\ref{Broadcast_network}, where each decoder wants to recover 
a function of the messages. For $\epsilon$-error criteria, 
we give a single letter characterization of the optimal broadcast rate
when either 
 (i) $Z_1 = Z_2$, 
 (ii) $X,Y$ are binary random variables, or
(iii) $Z_1,Z_2$ belong to a special class of {\em `compatible'} functions (defined in Section~\ref{sec_problem}).
For zero error criteria with variable length coding, we give single
letter upper and 
lower bounds for the optimal broadcast rate.

In contrast to correlated messages in our model, most work on index coding 
consider independent messages. On the other hand, in index coding problems 
in general,
each receiver wants to recover an arbitrary subset of the messages.
The goal is to minimize the broadcast rate of the message sent by the server 
(see \cite{Yossef_2011}-\cite{Maleki_2014} and references therein).
For correlated sources, index coding problem has been studied for 
$\epsilon$-error
where the receivers demand their messages to be decoded with $\epsilon$-error (see for example \cite{Miyake_2015}). 
They gave an inner bound, and showed
that it is tight for three receivers.
To the best of our knowledge, index coding problem has not been considered 
for correlated sources with zero error.
When the sources are independent and uniformly distributed, it was shown that 
the optimal rate for zero error and $\epsilon$-error are the same \cite{Langberg_2011}.
Our result extends this 
to correlated sources with arbitrary distribution
in the specific case of complementary delivery.
The technique followed in \cite{Langberg_2011} does not directly extend
to correlated sources.



The paper is organized as follows.
In Section~\ref{sec_problem}, we present our problem formulation and some
definitions.  We provide the main results of the paper in 
Section~\ref{sec_results}. Proof of the results are presented in 
Section~\ref{sec_proofs}. 

\section{Problem formulation and definitions}
\label{sec_problem}
\subsection{Problem formulation: function computation}

There are one encoder and two decoders for the function computation 
problem shown in Fig~\ref{Broadcast_network}.
A $(2^{nR},n)$ code for variable length coding   consists of one encoder
\begin{align*}
\phi: & \cX^{n} \times \cY^{n} \longrightarrow \{0,1\}^*
\end{align*}
and two decoders
\begin{align}
 \psi_1: & \phi (\cX^n \times \cY^{n})  \times \cX^{n} \longrightarrow \cZ_1^n, \label{Eq_Dec1} \\
 \psi_2: &  \phi (\cX^n \times \cY^{n}) \times \cY^{n}  \longrightarrow \cZ_2^n. \label{Eq_Dec2}
\end{align}
Here $\{0, 1\}^*$ denotes
the set of all finite length binary sequences and we assume that the encoding 
is prefix free.
Let us define $\hat{Z}_1^n = \psi_1(\phi(X^n,Y^n),X^n) $
and $\hat{Z}_2^n = \psi_2(\phi(X^n,Y^n),Y^n)$.
The probability of error for a $n$ length code  is defined as 
\begin{align}
 P_e^{(n)} \triangleq Pr & \{  (\hat{Z}_1^n,\hat{Z}_2^n)  \neq (Z_1^n,Z_2^n)\} \label{Prob_err}
\end{align}
The rate of the code is defined as
\begin{eqnarray*}
R & = & \frac{1}{n} \sum_{(x^{n},y^{n})} Pr(x^{n},y^{n}) \mid \phi (x^{n}, y^{n}) \mid,
\end{eqnarray*}
where $\mid \phi (x^{n}, y^{n}) \mid$ denotes the length of
the encoded sequence $\phi (x^{n}, y^{n})$.
A rate $R$ is said to be achievable with zero error if there is a zero-error
code of some length $n$ with rate $R$ and $P_e^{(n)} =0$.
Let $R_0^n$ denote the optimal zero error rate for $n$ length code.
Then the optimal zero error rate $ R^{*}_{0}$ is defined as $R^{*}_{0} = \lim\limits_{n \to \infty}R_0^n$.



A fixed length $(2^{nR},n)$ code consists of one encoder map
\begin{align*}
\phi: & \cX^{n} \times \cY^{n} \longrightarrow \{1,2,\ldots, 2^{nR}\}
\end{align*}
and the two decoder maps as defined in \eqref{Eq_Dec1}, \eqref{Eq_Dec2}. 

A rate $R$ is said to be achievable with $\epsilon$-error if there exists a sequence of $(2^{nR},n)$ codes
for which $P_e^{(n)} \rightarrow 0 $ as $n\rightarrow \infty$.
The optimal broadcast rate in this case is
the infimum of the set of all achievable rates
and it is denoted by  $R^*_{\epsilon}$.

 \subsection{Problem formulation: Index coding}

Let $H(i)$ denote the indices of the messages that receiver $i$ has and let 
$X_{H(i)}$ denote their corresponding values. Let us denote the complement of the set $H(i)$ by $H^{c}(i)$.
The set of messages that receiver $i$ has, is denoted by  $X_{H(i)}$. 
The set of messages receiver $i$ wants is $X_{W(i)}$.
For the complementary delivery index coding problem,  $W(i) = H^c(i)$.
The encoder, decoders, probability of error, achievable rate, etc. are
defined similarly as before.

\subsection{Graph theoretic definitions}
\label{sec_graph_def}

Let $G$ be a graph with vertex set $V(G)$ and edge set $E(G)$. A set 
$I \subseteq V(G) $ is called an
independent set if no two vertices in $I$ are adjacent in $G$.
Let $\Gamma(G)$ denote the set of all independent sets of $G$.
A clique of a graph $G$ is a complete subgraph of $G$.
A clique of the largest size is called a maximum clique.
The number of vertices in a maximum clique is called  clique number of $G$ and is denoted by $\omega(G)$.
The chromatic number of $G$, denoted by $\chi(G)$, is the minimum number of colors required 
to color the graph $G$. A graph $G$ is said to be perfect if for any vertex induced
subgraph $G'$ of $G$, $\omega(G') = \chi(G') $.
Note that the vertex disjoint union of perfect graphs is also perfect.

The $n$-fold OR product of $G$, denoted by $G^{\vee n}$,
is defined by $V(G^{\vee n}) = (V(G))^n$ and $E(G^{\vee n})=
\{(v^n,v'^n): (v_i,v'_i)\in E(G) \mbox{ for some } i\}$.
The $n$-fold AND product of $G$, denoted by $G^{\wedge n}$,
is defined by $V(G^{\wedge n}) = (V(G))^n$ and $E(G^{\wedge n})=
\{(v^n,v'^n): \mbox{ either } v_i=v'_i \mbox{ or } (v_i,v'_i)\in E(G) \mbox{ for all } i\}$.

For a graph $G$ and a random variable $X$ taking values in $V(G)$,
$(G,X)$ represents a {\it probabilistic graph}.
Chromatic entropy~\cite{Alon_1996} of $(G,X)$ is defined as
\begin{align*}
 H_{\chi}(G,X) &= \mbox{min} \{H(c(X)): \: c  \mbox{ is a coloring of } G \}.
\end{align*}
Let $W$ be distributed over the power set $2^{\cX}$. 
The graph entropy of the probabilistic graph $(G,X)$ is defined as
\begin{align}
 H_G(X) = \min_{X\in W \in \Gamma(G)} I(W;X),
 \label{eq:gentropy}
\end{align}
where $\Gamma(G)$ is the set of all independent sets of $G$. Here the minimum is taken over all 
conditional distributions $p_{W|X}$ which are non-zero only for $X\in W$.
The following result was shown in
\cite{Alon_1996}.
\begin{align}
\lim\limits_{n\to\infty} \frac{1}{n} H_{\chi}(G^{\vee n}, X^n) = H_G(X).
\label{eq:gpentropy}
\end{align}

The complementary graph entropy of $(G,X)$ is defined as 
\begin{align*}
 \bar{H}_{G}(X) = \lim\limits_{\epsilon \to 0} \limsup \limits_{n\to\infty} \frac{1}{n} \log_2 \{ \chi(G^{\wedge n}(T_{P_X, \epsilon }^n) ) \},
\end{align*}
where $T_{P_X, \epsilon }^n$ denotes the $\epsilon$-typical set
of length $n$ under the distribution $P_X$.
It was shown in \cite{Rose_2003} that
\begin{align}
\lim\limits_{n\to\infty} \frac{1}{n} H_{\chi}(G^{\wedge n}, X^n) = \bar{H}_G(X).
\label{eq:gpentropy}
\end{align}

%

To address the function computation problem, we define some suitable graphs.
Let $S_{X^nY^n}$ denote the support set of $(X^n,Y^n)$.
A rook's graph defined over $\cX \times \cY$ has its 
vertex set  $\cX \times \cY$ and edge set
$\{((x,y),(x',y')):x=x' \mbox{ or } y=y', \mbox{ but } (x,y)\neq (x',y')\}$.
For functions $Z_1 =f(X,Y), Z_2 = g(X,Y)$ defined over $\cX \times \cY$, we now define a graph called 
$Z_1Z_2$-modified rook's graph which is similar to the $f$-modified rook's graph defined in \cite{Ravi_2014}. 

%
\begin{definition}
\label{Def_graph_one}
$Z_1Z_2$-modified rook's graph $G_{XY}^{Z_1Z_2}$ is a subgraph of
the rook's graph on $\cX\times \cY$, which has its vertex set $S_{XY}$, and
two vertices $(x_{1}, y_{1})$ and $(x_{2}, y_{2})$ 
are adjacent if and only if 
\begin{align*}
& \mbox{1) }x_1=x_2 \mbox{ and }  f(x_{1},y_{1}) \neq f(x_{2},y_{2}),\\
\mbox{or }
& \mbox{2) } y_1=y_2 \mbox{ and }g(x_{1},y_{1}) \neq g(x_{2},y_{2}). 
\end{align*}
\end{definition}
\begin{example}
\label{Ex_DSBS}
Let us consider  a {\it doubly symmetric binary source} (DSBS($p$))
$(X,Y)$ where $p_{X,Y}(0,0)= p_{X,Y}(1,1)= (1-p)/2$ and $p_{X,Y}(0,1)= p_{X,Y}(1,0)=p/2$, 
and functions $Z_1,Z_2$ given by
\begin{align}
Z_1 & = X\cdot Y \label{eq2_Z1}\\
Z_2 &= \left\{
\begin{array}{cl}
 Y & \quad \mbox{if} \; Y=0\\
            X & \quad \mbox{if} \; Y =1,
\end{array} \right. \label{eq2_Z2}
\end{align}
$Z_1Z_2$-modified rook's graph of these functions is shown in Fig.~\ref{Mod_Rooks1}.
\end{example}

Next we extend the definition of $G_{XY}^{Z_1Z_2}$ to $n$ instances:
\begin{definition}
\label{Def_graph_multi}
$G_{XY}^{Z_1Z_2}(n)$ has its vertex set $S_{X^nY^n}$, and
two vertices $(x^{n}, y^{n})$ and $(x'^{n}, y'^{n})$ 
are adjacent if and only if 
\begin{align*}
& \mbox{1) }x^n=x'^n \mbox{ and }  f(x_{i},y_{i}) \neq f(x'_{i},y'_{i}) \mbox{ for some } i, \\
\mbox{or }
& \mbox{2) } y^n=y'^n \mbox{ and }g(x_{i},y_{i}) \neq g(x'_{i},y'_{i}) \mbox{ for some } i. 
\end{align*}
\end{definition}

Clearly, $G_{XY}^{Z_1Z_2}(n)$ is the $Z_1^nZ_2^n$-modified rook's graph
on the vertex set $S_{X^nY^n}$. We note here from the definitions that $G_{XY}^{Z_1Z_2}(n)$ is a subgraph 
of $(G_{XY}^{Z_1Z_2})^{\vee n }$.


\begin{definition}
 Functions $Z_1,Z_2$ are said to be compatible
 if there exists a function $Z = h(X,Y)$ 
 such that $G_{XY}^{ZZ}=G_{XY}^{Z_1Z_2}$. We call such a graph $G_{XY}^{Z_1Z_2}$  compatible.
\end{definition}

\begin{example}
\label{Ex_DSBS1}
Let us consider another pair $Z_1,Z_2$ which is also defined over a DSBS($p$).
\begin{align}
Z_1 &= \left\{
\begin{array}{cl}
 Y & \quad \mbox{if} \; X=0\\
            X & \quad \mbox{if} \; X =1,
\end{array} \right. \label{eq_Z1}\\
Z_2 & = Y. \label{eq_Z2}
\end{align}

$Z_1Z_2$-modified rook's graph of the above functions is shown in Fig.~\ref{Mod_Rooks2}.
$G_{XY}^{Z_1Z_2}$  in Fig.~\ref{Mod_Rooks2} is not a compatible graph. Whereas 
$G_{XY}^{Z_1Z_2}$  in Fig.~\ref{Mod_Rooks1} is a compatible graph because
it is the same as $G_{XY}^{ZZ}$ for $Z=X\cdot Y$.  
\end{example}

\begin{figure}[h]
 \centering
  \begin{subfigure}[b]{0.18\textwidth}
 \centering
\includegraphics[scale=0.5]{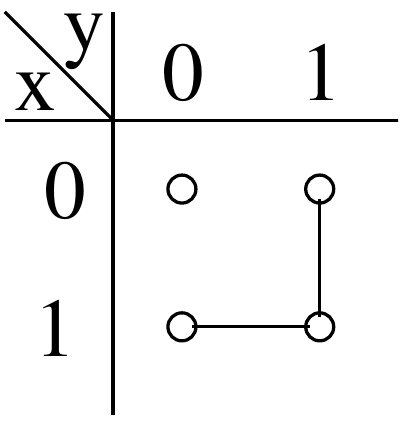}
\caption{$G_{XY}^{Z_1Z_2}$ for $Z_1,Z_2$ defined in \eqref{eq2_Z1},\eqref{eq2_Z2}}
\label{Mod_Rooks1}
\end{subfigure}
\hspace{8 mm}
 \begin{subfigure}[b]{0.18\textwidth}
 \centering
\includegraphics[scale =0.5]{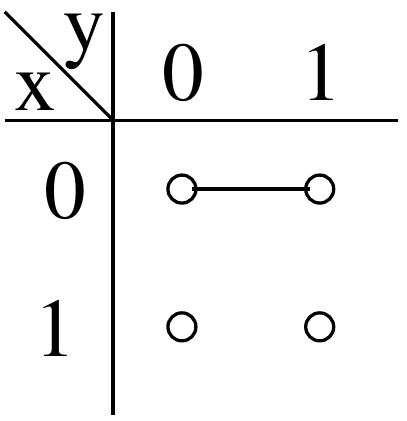}
\caption{$G_{XY}^{Z_1Z_2}$ for $Z_1,Z_2$ defined in \eqref{eq_Z1},\eqref{eq_Z2}}
\label{Mod_Rooks2}
\end{subfigure}

\caption{$Z_1Z_2$-modified rook's graphs}
\label{Fig2}
\end{figure}

\section{Main results}
\label{sec_results}

\begin{figure}[h]
\centering
\includegraphics[width = \columnwidth]{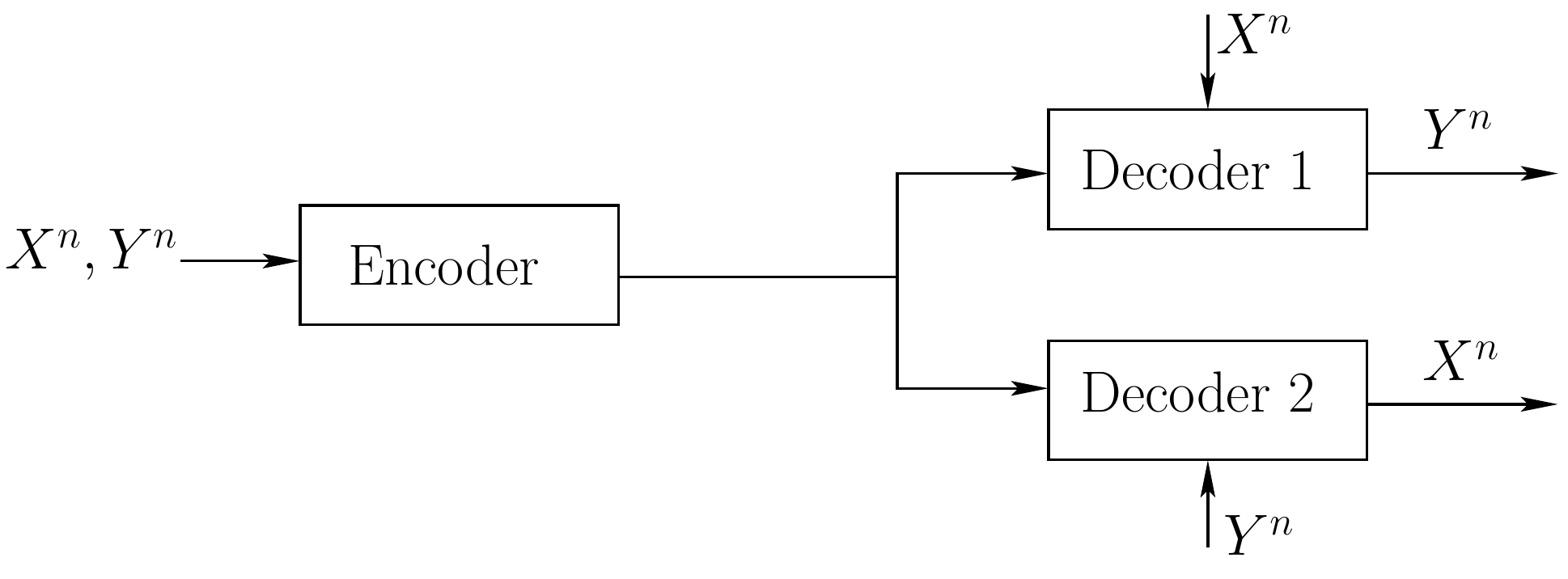}
\caption{Complementary delivery}
\label{Two_Rx}
\end{figure}

Our first result shows that
the optimal rate for zero error and $\epsilon$-error
are the same for the complementary delivery problem\footnote{ In Section~\ref{sec_proofs} before proving Theorem~\ref{Thm_two_Rx},
we argue that the scheme of binning which achieves the optimal $\epsilon$-error rate 
does not work with zero-error.} shown in Fig~\ref{Two_Rx}.

\begin{theorem}
 \label{Thm_two_Rx}
For the complementary delivery
problem shown in Fig.~\ref{Two_Rx}, 
the optimal zero error broadcast rate
$R^{*}_{0} = \max\{ H(Y|X), H(X|Y) \}$.
\end{theorem}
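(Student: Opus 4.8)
The plan is to prove the two inequalities $R_0^*\ge\max\{H(Y|X),H(X|Y)\}$ and $R_0^*\le\max\{H(Y|X),H(X|Y)\}$ separately. For the converse I would fix any zero-error prefix-free code $\phi$ of block length $n$ and set $S=\phi(X^n,Y^n)$. Since the code is prefix free, its expected length obeys $E[|\phi(X^n,Y^n)|]\ge H(S)$. Because decoder~1 reconstructs $Y^n$ from $(S,X^n)$ with zero error, $H(Y^n\mid X^n,S)=0$, so $H(S)\ge H(S\mid X^n)\ge I(Y^n;S\mid X^n)=H(Y^n\mid X^n)=nH(Y\mid X)$; the symmetric argument at decoder~2 gives $H(S)\ge nH(X\mid Y)$. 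Dividing by $n$ and letting $n\to\infty$ yields $R_0^*\ge\max\{H(Y|X),H(X|Y)\}$.

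The key observation for achievability is that a length-$n$ zero-error code is exactly a proper coloring of $G_{XY}^{YX}(n)$: given $x^n$, decoder~1 can recover $y^n$ if and only if all support pairs sharing the row $x^n$ receive distinct messages, and symmetrically for columns, which is precisely the adjacency rule of $G_{XY}^{YX}(n)$ (here $f(X,Y)=Y$ and $g(X,Y)=X$, so this graph is the rook's graph on $S_{X^nY^n}$). I would then use two structural facts. First, the rook's graph on $\cX^n\times\cY^n$ is the line graph of a complete bipartite graph and is therefore perfect, so every vertex-induced subgraph $G'$ satisfies $\chi(G')=\omega(G')$. Second, in a rook's graph every clique lies entirely within a single row or a single column. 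Restricting to the jointly typical set $T=T_{P_{XY},\epsilon}^n$, a maximal row-clique has size equal to the number of $y^n$ with $(x^n,y^n)\in T$, which is at most $2^{n(H(Y|X)+\delta)}$, and a maximal column-clique has size at most $2^{n(H(X|Y)+\delta)}$; hence $\chi\big(G_{XY}^{YX}(n)[T]\big)=\omega\le 2^{n(\max\{H(Y|X),H(X|Y)\}+\delta)}$.

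The scheme then uses variable-length coding to reconcile zero error with the typicality count, which is the conceptual heart of the proof and the reason the binning scheme (which would collide on atypical inputs) must be avoided. I would fix a proper coloring $c$ of $G_{XY}^{YX}(n)[T]$ with the above number of colors and encode as follows: if $(x^n,y^n)\in T$, emit a flag bit $0$ followed by the $\lceil\log_2\omega\rceil$-bit index of $c(x^n,y^n)$; otherwise emit a flag bit $1$ followed by the raw $n\lceil\log_2|\cX|\rceil+n\lceil\log_2|\cY|\rceil$-bit description of $(x^n,y^n)$. This map is prefix free and zero-error: the flag separates the two regimes, within $T$ the proper coloring lets each decoder single out the unique same-row (respectively same-column) typical sequence carrying the received color, and raw inputs are decoded trivially. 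The expected length is $\Pr(T)\,(1+\lceil\log_2\omega\rceil)+\Pr(T^c)\,(1+n\log_2(|\cX|\,|\cY|))$, so dividing by $n$ and letting $n\to\infty$ the typical term tends to $\max\{H(Y|X),H(X|Y)\}+\delta$ while the atypical term tends to $0$ because $\Pr(T^c)\to0$; letting $\delta\to0$ gives the matching upper bound.

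I expect the main obstacle to be the clean accounting of the atypical inputs under the zero-error requirement: unlike in the $\epsilon$-error setting one cannot discard them, and the argument hinges on the fact that variable-length coding charges their verbatim description only the vanishing average cost $\Pr(T^c)\log_2(|\cX|\,|\cY|)\to0$. The remaining delicate points are justifying $\chi=\omega$ via perfectness of the rook's graph and the conditional-typicality counts bounding the row- and column-cliques; both are standard once the reformulation of a zero-error code as a coloring of $G_{XY}^{YX}(n)$ is in place.
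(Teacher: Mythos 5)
Your proof is correct, but it reaches the result by a genuinely different route from the paper. Both arguments start from the same reformulation: a zero-error length-$n$ code is exactly a proper coloring of the $n$-instance graph $G_{XY}^{YX}(n)$, which for $f=Y$, $g=X$ is the rook's graph on $S_{X^nY^n}$. From there the paper single-letterizes: it writes $G(n)$ as $G_1^{\wedge n}\cup G_2^{\wedge n}$, invokes the Tuncel--Nayak--Koulgi--Rose theorem that the normalized chromatic entropy of a union of AND-powers equals $\max_i \bar{H}_{G_i}$, and then evaluates each complementary graph entropy via perfection of the single-letter graphs $G_1,G_2$ (disjoint unions of cliques) and K\"{o}rner's component lemma, obtaining $H(Y|X)$ and $H(X|Y)$. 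You instead work directly with the $n$-letter graph restricted to the typical set: perfection of rook's graphs (line graphs of bipartite graphs, i.e.\ K\"{o}nig's theorem) gives $\chi=\omega$, the clique structure (every clique sits in one row or one column) plus conditional typicality counting bounds $\omega$ by $2^{n(\max\{H(Y|X),H(X|Y)\}+\delta)}$, and a flag-bit two-part variable-length code absorbs the atypical pairs at vanishing average cost; your converse is an explicit cut-set/Kraft argument, where the paper gets the matching lower bound for free from the exact chromatic-entropy characterization. Your version is more elementary and self-contained (it avoids the complementary-graph-entropy machinery and the external $R_{\min}$ theorem entirely), and it makes transparent why variable-length coding is what rescues zero error where binning fails; the paper's version buys a cleaner path to generalization, since the same graph-entropy identities are reused for the $K$-message complementary delivery problem (Theorem~\ref{Thm_Many_Rx}) and for the function-computation bounds of Theorem~\ref{Broadcast_rate}, where the relevant graphs are no longer rook's graphs and your K\"{o}nig-based clique analysis would not apply directly.
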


We now extend Theorem~\ref{Thm_two_Rx} to a more general
complementary delivery index coding problem with arbitrary
number of messages/decoders.

\begin{theorem}
\label{Thm_Many_Rx}
For the complementary delivery index coding problem,
where each receiver demands the complement of its side information,
the optimal zero error broadcast rate
  $R^{*}_{0} =  \max\limits_{i} H(X_{H^{c}(i)}| X_{H(i)})$.
\end{theorem}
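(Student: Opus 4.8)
The plan is to prove Theorem~\ref{Thm_Many_Rx} by establishing matching converse and achievability bounds, generalizing the two-receiver argument of Theorem~\ref{Thm_two_Rx}. For the converse, I would use a cut-set type argument: focus on a single receiver $i$. Since receiver $i$ must recover $X_{H^c(i)}$ from the broadcast message together with its side information $X_{H(i)}$, and since $P_e^{(n)}=0$, the message must allow perfect reconstruction of $X_{H^c(i)}^n$ given $X_{H(i)}^n$. A standard entropy/counting argument then forces the rate to be at least $H(X_{H^c(i)}\mid X_{H(i)})$ for each $i$, and taking the maximum over $i$ gives $R_0^* \ge \max_i H(X_{H^c(i)}\mid X_{H(i)})$. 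This direction is routine and does not even require zero error; the $\epsilon$-error converse via Fano's inequality yields the same bound.

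The substance is the zero-error achievability, and here the plan is to lift the coding construction used for Theorem~\ref{Thm_two_Rx} to the many-receiver setting. The natural tool is the graph-theoretic machinery set up in Section~\ref{sec_problem}. For each receiver $i$, the requirement that $i$ distinguish sequences that differ in the coordinates it must decode, while allowing confusion among sequences it cannot tell apart given its side information, defines a confusability (characteristic) graph $G_i$ on the support $S_{X^n}$; two source sequences are adjacent in $G_i$ precisely when they share the same side information $X_{H(i)}^n$ but differ in $X_{H^c(i)}^n$. A valid zero-error code is exactly a coloring that is proper simultaneously for every $G_i$, i.e., a proper coloring of the union graph $G=\bigcup_i G_i$. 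The encoder sends the color, and each decoder resolves its demand because within its own color class no two sequences with the same side information differ in the wanted coordinates. Thus the zero-error rate equals the asymptotic per-letter chromatic entropy / graph entropy of the single-letter union graph, and the task reduces to showing this quantity equals $\max_i H(X_{H^c(i)}\mid X_{H(i)})$.

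The key structural fact I would exploit is that each single-letter graph $G_i$ is a disjoint union of cliques (all sequences with a fixed value of $X_{H(i)}$ form a clique, since any two such differ in some wanted coordinate), and such graphs are perfect; moreover the excerpt notes that vertex-disjoint unions of perfect graphs are perfect and gives the graph-entropy limit \eqref{eq:gpentropy} together with the chromatic-entropy characterization. The hard part will be handling the union $G=\bigcup_i G_i$, which is no longer a simple disjoint union of cliques and need not obviously be perfect. My plan is to show that the relevant graph (or its $n$-fold OR power restricted to the typical/support set) is nonetheless perfect, or that its fractional/graph entropy still collapses to the maximum of the individual conditional entropies. Concretely, I expect to argue that $H_G(X)=\max_i H_{G_i}(X)=\max_i H(X_{H^c(i)}\mid X_{H(i)})$: the lower bound is immediate since $G\supseteq G_i$ implies $H_G(X)\ge H_{G_i}(X)$, and the individual graph entropy of the disjoint-union-of-cliques graph $G_i$ equals $H(X_{H^c(i)}\mid X_{H(i)})$ because an optimal independent-set covering picks, within each clique, the conditional type. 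The matching upper bound should follow by perfectness, which makes chromatic number and clique number coincide and lets the graph entropy be achieved by a single coloring whose rate is governed by the most demanding receiver.

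Therefore the crux, and the main obstacle I anticipate, is the perfectness (or equivalent entropy-collapse) of the union graph $G=\bigcup_i G_i$: unlike the two-receiver rook's-graph case where structure is transparent, I must verify that superimposing several clique-partition graphs does not inflate the chromatic number beyond the clique number asymptotically. I would try to establish this either by a direct coloring scheme that assigns to each typical sequence $x^n$ a label determined by the conditional types $x^n_{H^c(i)}$ given $x^n_{H(i)}$ across the receivers—showing that the number of colors needed grows like $2^{n\max_i H(X_{H^c(i)}\mid X_{H(i)})}$—or by proving perfectness of $G$ via the Strong Perfect Graph Theorem (ruling out odd holes/antiholes) and then invoking $\omega(G^{\vee n})=\chi(G^{\vee n})$ together with \eqref{eq:gpentropy}. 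The first, constructive route is likely cleaner and parallels the zero-error scheme behind Theorem~\ref{Thm_two_Rx}, so I would pursue it first and fall back on the perfect-graph characterization only if needed.
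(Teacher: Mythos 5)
Your converse and your reduction of zero-error achievability to coloring the union confusability graph $G=\bigcup_i G_i$ match the paper, and you correctly locate the crux: showing that the asymptotic per-letter chromatic entropy of the $n$-instance union graph collapses to $\max_i H(X_{H^c(i)}\mid X_{H(i)})$. But you leave that crux open, and the two routes you sketch for closing it do not work as stated. First, the $n$-instance confusability graph is the \emph{AND} power $G_i^{\wedge n}$ (two $n$-sequences are confusable for receiver $i$ only if they agree on $X_{H(i)}$ in \emph{every} coordinate), not the OR power; the relevant limit is therefore the complementary graph entropy $\bar{H}_{G_i}$ of \eqref{eq:gpentropy} (the Koulgi et al.\ version), not the Alon--Orlitsky graph entropy $H_{G_i}$, and the two coincide only for perfect graphs. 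Second, perfectness of the \emph{union} graph is neither established by your argument nor actually sufficient: even if $G$ were perfect, you would still need $H_{\bigcup_i G_i}(X)=\max_i H_{G_i}(X)$, and Körner's subadditivity only gives $H_{\bigcup_i G_i}\le \sum_i H_{G_i}$; nothing in your outline forces the sum down to the max. Your constructive fallback (coloring by conditional types across all receivers simultaneously) is essentially an attempt to reprove this collapse from scratch.

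The paper closes exactly this gap by quoting Theorem~\ref{Thm_and_union} from \cite{Tuncel_2009}: for a family $\cG=(G_1,\dots,G_k)$ on a common vertex set,
$\lim_{n\to\infty}\frac{1}{n}H_{\chi}\bigl(\bigcup_i G_i^{\wedge n},P_X^n\bigr)=\max_i \bar{H}_{G_i}(X)$.
This requires no property of the union at all; one only needs each \emph{individual} $G_i$ to be a disjoint union of cliques (partition by the value of $X_{H(i)}$), hence perfect, so that $\bar{H}_{G_i}=H_{G_i}=H(X_{H^c(i)}\mid X_{H(i)})$ by Lemma~\ref{lem_graph_union}. That is the entire content of the proof of Theorem~\ref{Thm_two_Rx}, which Theorem~\ref{Thm_Many_Rx} inherits verbatim. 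To repair your proposal, replace the OR power by the AND power throughout, drop the attempt to prove perfectness of the union, and invoke the complementary-graph-entropy result for unions of AND powers (or supply an independent proof of it, which is substantially harder than the rest of the argument).
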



We now consider broadcast function computation with complementary
side information, and characterize the optimal rate under
$\epsilon$-error in two special cases, and also give single
letter bounds for the optimal rate under $\epsilon$-error and zero error.

\begin{theorem}
\label{Broadcast_rate}
For the broadcast function computation with complementary delivery problem shown in Fig.~\ref{Broadcast_network}
\begin{enumerate}[(i)]
 \item  \label{Broadcast_rate2} The optimal rate $R^*_{\epsilon}$ is given by
 \begin{align*}
  R^*_{\epsilon} = \max ( H(Z_1|X), H(Z_2|Y) )
 \end{align*}
 if either of the following conditions hold
  \begin{align*}
  &\mbox{a) } Z_1,Z_2 \mbox{ are compatible. In particular, this condition}\\
  & \quad \mbox{ is satisfied when } Z_1 = Z_2.\\
 &\mbox{b) } X,Y \mbox{ are binary random variables.}
 \end{align*} 
  \item \label{Broadcast_rate3}
Let  
 \begin{align*}
  R_I & = \min_{p(u|x,y)} \max ( I(X;U|Y), I(Y;U|X) ),\\ 
  & \mbox{ where } (X,Y) \in U \in \Gamma(G_{XY}^{Z_1Z_2}).\\
  R_O & = \max_{p(v|x,y)} \max ( I(X;V|Y), I(Y;V|X) ) \\
  & \mbox{ with } \cV| \leq |\cX|.|\cY|+2.
 \end{align*}
 Then $R_O \leq R^*_{\epsilon} \leq R_I$.
 
  \item \label{Broadcast_zero_error}
  The optimal zero error rate $R^*_{0}$ satisfies 
  $ \max\{ H(Z_1|X), H(Z_2|Y) \} \leq R^*_{0} \leq H_{G_{XY}^{Z_1Z_2}}(X,Y)$.
\end{enumerate}
\end{theorem}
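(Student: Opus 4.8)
The plan is to bracket both $R^*_\epsilon$ and $R^*_0$ between one common converse and two achievability schemes---a random-binning scheme for the $\epsilon$-error bound $R_I$ and a graph-coloring scheme for the zero-error bound---and then to show that in the two settings of item~\ref{Broadcast_rate2} the achievable rate collapses onto the converse. I would first dispose of the converse that supplies the lower bounds in items~\ref{Broadcast_rate3} and \ref{Broadcast_zero_error}. Because decoder~$1$ produces $\hat Z_1^n$ from $(M,X^n)$ with vanishing error, Fano's inequality gives $nR\ge H(M)\ge H(M\mid X^n)\ge I(M;Z_1^n\mid X^n)\ge nH(Z_1\mid X)-n\epsilon_n$, and symmetrically $nR\ge nH(Z_2\mid Y)-n\epsilon_n$; hence $R\ge\max\{H(Z_1\mid X),H(Z_2\mid Y)\}$. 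For the zero-error case the identical chain holds exactly (with $\epsilon_n=0$), still invoking the prefix-free inequality $nR\ge H(M)$, which yields the lower bound in \ref{Broadcast_zero_error}.

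For the upper bound $R_I$ I would run a Wyner--Ziv-type codebook-plus-binning scheme driven by an auxiliary $U$ with $(X,Y)\in U\in\Gamma(G_{XY}^{Z_1Z_2})$. Generate about $2^{nI(U;XY)}$ codewords $U^n$, let the encoder transmit the bin index of one that is jointly typical with $(X^n,Y^n)$, and let each decoder recover $U^n$ by joint-typicality decoding within the bin; this succeeds at decoder~$1$ once the bin rate exceeds $I(U;Y\mid X)$ and at decoder~$2$ once it exceeds $I(U;X\mid Y)$, so the common bin rate $\max\{I(X;U\mid Y),I(Y;U\mid X)\}$ suffices and minimizing over $U$ gives $R^*_\epsilon\le R_I$. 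The independent-set constraint is exactly what makes $U^n$ sufficient for computation: if $(x,y)$ and $(x,y')$ lie in $U$ they are non-adjacent in $G_{XY}^{Z_1Z_2}$, so $f(x,y)=f(x,y')$, and decoder~$1$ reads $f(X_i,Y_i)$ off $(X_i,U_i)$, decoder~$2$ reading $g$ symmetrically. The same graph powers the zero-error bound $H_{G_{XY}^{Z_1Z_2}}(X,Y)$: a proper coloring of $G_{XY}^{Z_1Z_2}(n)$ assigns distinct colors to precisely those sequence pairs the decoders must separate, and since $G_{XY}^{Z_1Z_2}(n)\subseteq(G_{XY}^{Z_1Z_2})^{\vee n}$ any proper coloring of the OR-power serves; sending the color with an optimal prefix code costs rate $\frac1n H_{\chi}((G_{XY}^{Z_1Z_2})^{\vee n},X^nY^n)$, which tends to $H_{G_{XY}^{Z_1Z_2}}(X,Y)$ by the limit in \eqref{eq:gpentropy}.

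For the outer bound $R_O$ I would single-letterize $H(M)$ with the \emph{past-only} auxiliary $V_i=(M,X^{i-1},Y^{i-1})$. Using that $(X_i,Y_i)$ is i.i.d., one checks $I(M;X^n\mid Y^n)=\sum_i I(M;X_i\mid X^{i-1},Y^n)\ge\sum_i I(X_i;V_i\mid Y_i)$ (extra conditioning on $Y_{i+1}^n$ only lowers the conditional entropy), and symmetrically $I(M;Y^n\mid X^n)\ge\sum_i I(Y_i;V_i\mid X_i)$ with the \emph{same} $V_i$; since $nR\ge H(M)$ dominates both left-hand sides, a time-sharing variable yields $R\ge\max\{I(X;V\mid Y),I(Y;V\mid X)\}$ for the induced $V=(V_Q,Q)$, and a support-lemma argument (preserving $p_{XY}$ together with the two conditional mutual informations) gives $|\cV|\le|\cX|\,|\cY|+2$, establishing $R^*_\epsilon\ge R_O$. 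The point that needs care is obtaining a single auxiliary valid for both inequalities at once: the past-only choice is what makes this work, whereas the naive choice that also carries the future coordinates breaks one of the two directions.

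Finally, item~\ref{Broadcast_rate2} follows once I exhibit, in each case, an auxiliary meeting the cut-set value $\max\{H(Z_1\mid X),H(Z_2\mid Y)\}$. When $Z_1,Z_2$ are compatible, pick $Z=h(X,Y)$ with $G_{XY}^{ZZ}=G_{XY}^{Z_1Z_2}$ and take $U=Z$: each level set of $h$ is an independent set of $G_{XY}^{ZZ}$, so $U$ is admissible and $I(X;U\mid Y)=H(Z\mid Y)$, $I(Y;U\mid X)=H(Z\mid X)$. Equality of the two graphs forces, for each fixed $x$, the partition of $y$ induced by $f(x,\cdot)$ to coincide with that induced by $h(x,\cdot)$, so $f(x,\cdot)$ is a relabeling of $h(x,\cdot)$ and $H(Z_1\mid X)=H(Z\mid X)$; symmetrically $H(Z_2\mid Y)=H(Z\mid Y)$. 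Thus $R_I\le\max\{H(Z\mid X),H(Z\mid Y)\}=\max\{H(Z_1\mid X),H(Z_2\mid Y)\}$, and the converse closes the gap; the case $Z_1=Z_2$ is the instance $h=Z_1$. I expect the binary case~(b) to be the main obstacle, since Example~\ref{Ex_DSBS1} shows $G_{XY}^{Z_1Z_2}$ need not be compatible there, so the reduction above is unavailable. My plan is a direct case analysis: on binary alphabets $G_{XY}^{Z_1Z_2}$ is a subgraph of the $4$-cycle, and for each such subgraph I would construct an admissible $U$ meeting the cut-set value---for the single-edge graph of Fig.~\ref{Mod_Rooks2}, for instance, assigning $U$ to be the maximal independent set $\{(0,1),(1,0),(1,1)\}$ when $(X,Y)=(0,1)$ and $\{(0,0),(1,0),(1,1)\}$ otherwise gives $I(X;U\mid Y)=I(Y;U\mid X)=H(Z_1\mid X)$---and verify that such constructions exhaust all possibilities.
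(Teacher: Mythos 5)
Your overall architecture matches the paper's: cut-set/Fano converse, the Wyner--Ziv-style codebook-plus-binning scheme with an independent-set-valued auxiliary for $R_I$, the past-only auxiliary $V_i=(M,X^{i-1},Y^{i-1})$ for $R_O$, coloring of the OR-power for the zero-error upper bound, and the reduction of the compatible case to Slepian--Wolf coding of $Z=h(X,Y)$ (your choice $U=Z$ inside the $R_I$ scheme is equivalent to the paper's direct binning of $Z^n$, and your ``relabeling'' observation is exactly Lemma~\ref{lemma_rooks}).

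The gap is in part~(\ref{Broadcast_rate2})~b), precisely where you anticipated trouble: the one-edge graph. Your proposed auxiliary for Fig.~\ref{Mod_Rooks2} is the deterministic indicator $U=\mathbf{1}\{(X,Y)=(0,1)\}$, and the claim $I(X;U\mid Y)=H(Z_1\mid X)$ is false in general. Here $I(X;U\mid Y)=H(U\mid Y)=p_Y(1)\,H(X\mid Y{=}1)$ while $H(Z_1\mid X)=p_X(0)\,H(Y\mid X{=}0)$; these coincide for the DSBS but not for a skewed source. For instance, with $p_{00}=p_{01}=p_{10}=0.05$ and $p_{11}=0.85$ one gets $I(X;U\mid Y)\approx 0.9\,h_b(1/18)\approx 0.28$ against a cut-set value of $0.1\,h_b(1/2)=0.1$, so your $U$ overshoots the target by nearly a factor of three. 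Forcing decoder~2 to decode $U^n$ is the wrong move here, because decoder~2 needs nothing: the single edge lies in a row, so $H(Z_2\mid Y)=0$ and decoder~2 computes $g$ from $Y$ alone. The paper exploits exactly this: it reduces to a single-receiver Slepian--Wolf problem in which $Z_1^n$ is binned against side information $X^n$ at rate $H(Z_1\mid X)$, with decoder~2 ignoring the message. If you insist on staying inside the $R_I$ framework you would need a \emph{randomized} $p(u\mid x,y)$ tuned to the source distribution (a suitable mixture over the two maximal independent sets does work), but that requires an optimization you have not carried out; the paper's direct argument avoids it entirely. The rest of the case analysis (zero, two, three, or four edges all being compatible) is fine and is what the paper does.
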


\section{Proofs of the results}
\label{sec_proofs}
\subsection{Proof of Theorem~\ref{Thm_two_Rx}}
\label{sec_one_rx}
\begin{remark}
To achieve rates $R$ close to $\max \{H(X|Y),H(Y|X)\}$, let us first consider 
the obvious scheme of random binning $X^n\oplus Y^n$ into $2^{Rn}$ bins.
The decoders can do joint typicality decoding of $X^n\oplus Y^n$
similar to Slepian-Wolf scheme. However, there are two sources
of errors. The decoding errors  for non-typical sequences $(x^n,y^n)$
can be avoided by transmitting those $x^n\oplus y^n$  unencoded, with an 
additional vanishing rate. However, for the same $y^n$, there is
a non-zero probability of two different $x^n\oplus y^n,x'^n\oplus y^n$, both of which are
jointly typical with $y^n$,  being in the same bin; leading to an error
in decoding for at least one of them. It is not clear how to avoid
this type of error with the help of an additional vanishing rate.
\end{remark}

To prove Theorem~\ref{Thm_two_Rx}, we first consider
the problem for single receiver case as shown in Fig.~\ref{One_Rx}.
Witsenhausen \cite{Witsen_1976} studied this problem under
fixed length coding, and gave a single letter characterization
of the optimal rate.
For variable length coding, optimal rate $R^{*}_{0}$ can be argued 
to be $R^{*}_{0} = H(Y|X)$ by using one codebook for each $x$.
Here, we give a graph theoretic 
proof for this, and later extend this technique to prove 
Theorem~\ref{Thm_two_Rx}.

\begin{figure}[h]
\centering
\includegraphics[scale=0.5]{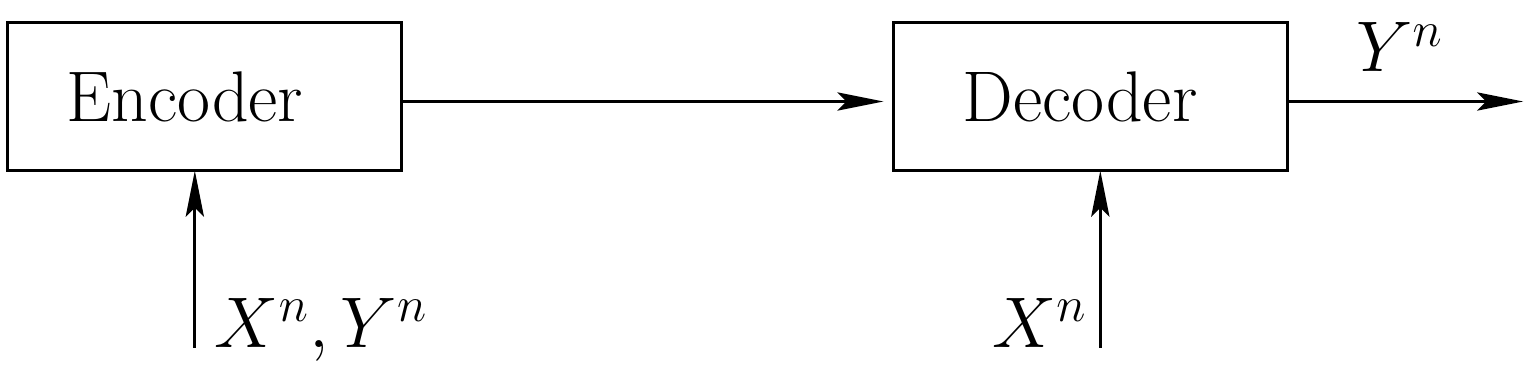}
\caption{One receiver with side information}
\label{One_Rx}
\end{figure}

\begin{lemma}
 \label{Thm_one_Rx}
For the problem depicted  in Fig.~\ref{One_Rx}, $R^{*}_{0} = H(Y|X)$.
\end{lemma}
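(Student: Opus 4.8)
The plan is to cast this as a zero-error source-coding-with-side-information problem and reduce the optimal rate to a graph entropy that I can evaluate in closed form. I first define the confusability graph $G$ on vertex set $S_{XY}$ by making $(x,y)$ and $(x',y')$ adjacent iff $x=x'$ and $y\neq y'$; thus $G$ is a disjoint union of cliques, one clique per value of $x$, taken over the $y$'s with $p(x,y)>0$. The key observation is the correspondence between zero-error variable-length codes and proper colorings of the $n$-block confusability graph. Since the decoder knows $X^n$, two source blocks $(x^n,y^n)$ and $(\tilde x^n,\tilde y^n)$ must be assigned distinct messages precisely when $x^n=\tilde x^n$ and $y^n\neq \tilde y^n$ (a common side-information value forcing different outputs), while blocks differing in $x^n$ may safely reuse a message. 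This block graph, call it $G(n)$, satisfies $G(n)\subseteq G^{\vee n}$, because $x^n=\tilde x^n,\,y^n\neq\tilde y^n$ forces a $G$-edge in some coordinate.

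\textbf{Achievability.} I would take a minimum-entropy proper coloring $c$ of $G^{\vee n}$ and transmit $c(X^n,Y^n)$ using a prefix-free (Huffman) code tuned to its distribution. Any proper coloring of $G^{\vee n}$ is proper for $G(n)$, so within each $x^n$-clique all $y^n$ receive distinct colors; the decoder parses the color from the prefix-free codeword and then recovers $y^n$ uniquely from the color and its side information $x^n$, giving zero error. The expected length is at most $H_\chi(G^{\vee n},(X^n,Y^n))+1$, so the rate is at most $\tfrac{1}{n}\bigl(H_\chi(G^{\vee n},(X^n,Y^n))+1\bigr)$. Letting $n\to\infty$ and invoking the chromatic-entropy/graph-entropy limit of \cite{Alon_1996} yields $R^*_0 \le H_G(X,Y)$.

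\textbf{Evaluating the graph entropy.} The heart of the argument is to show $H_G(X,Y)=H(Y|X)$. Independent sets of $G$ meet each clique in at most one vertex, i.e.\ they are (partial) sections $x\mapsto y$. For the upper bound I construct a random full section $\sigma$ by drawing $\sigma(x')\sim P_{Y|X=x'}$ independently across $x'$ and independently of $X$, then setting $W$ to be the graph of $\sigma$ and $(X,Y)=(X,\sigma(X))$; this reproduces $P_{XY}$, keeps $\sigma\perp X$, and gives $I(W;X,Y)=H(X,Y)-H(X|\sigma)=H(X,Y)-H(X)=H(Y|X)$. For the matching lower bound, note that for any feasible $W\ni(X,Y)$ the value $Y$ is determined by the pair $(X,W)$, so $I(W;X,Y)\ge I(W;Y|X)=H(Y|X)$. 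Hence $H_G(X,Y)=H(Y|X)$ and $R^*_0\le H(Y|X)$.

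\textbf{Converse and main obstacle.} For the converse I use an elementary prefix-free argument: for each fixed $x^n$ the codewords $\{\phi(x^n,y^n)\}$ form a prefix-free code on the support of $P_{Y^n|X^n=x^n}$, so the conditional expected length is at least $H(Y^n\mid X^n=x^n)$; averaging over $x^n$ gives $R\ge \tfrac1n H(Y^n|X^n)=H(Y|X)$. Combined with the achievability this proves $R^*_0=H(Y|X)$. I expect the main obstacle to be the graph-entropy evaluation, specifically the explicit random-section construction and its matching information-theoretic converse, together with a careful justification of the code-to-coloring correspondence in the presence of an encoder that also observes $X^n$. The remark preceding the proof already shows that naive binning fails here, which is exactly why the per-$x$ codebook (coloring) structure, rather than Slepian--Wolf binning, is the correct tool; this same technique is what I would then lift to the two-receiver setting to obtain Theorem~\ref{Thm_two_Rx}.
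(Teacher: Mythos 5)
Your proof is correct, but it takes a genuinely different route from the paper's. The paper establishes the exact identification $G(n)=G^{\wedge n}$ (Claim~\ref{Lem_graph_eq}), from which $R^*_0=\bar{H}_G(X,Y)$, the \emph{complementary} graph entropy (Claim~\ref{Lem_AND_multi}); it then uses perfectness of $G$ to conclude $\bar{H}_G(X,Y)=H_G(X,Y)$ and evaluates the latter via K\"orner's component decomposition (Lemma~\ref{lem_graph_union}). You use only the one-sided containment $G(n)\subseteq G^{\vee n}$ together with the Alon--Orlitsky limit \cite{Alon_1996} for OR powers to get the achievability bound $R^*_0\le H_G(X,Y)$, evaluate $H_G(X,Y)=H(Y|X)$ by hand (your random-section construction for the upper bound and the observation $H(Y|W,X)=0$ for the lower bound correctly replace the appeal to Lemma~\ref{lem_graph_union}), and close with an elementary Kraft-inequality converse $R\ge \tfrac1n H(Y^n|X^n)$ in place of the paper's exact chromatic-entropy characterization. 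Your version is more self-contained and elementary: it needs neither complementary graph entropy, nor the perfect-graph identity $\bar{H}_A=H_A$, nor the component lemma. What the paper's heavier machinery buys is the AND-power form $R^*_0=\lim_n\tfrac1n H_\chi(G^{\wedge n},(X^n,Y^n))$, which is exactly what Theorem~\ref{Thm_and_union} (on unions of AND powers) consumes in the two-receiver proof. Your closing claim that the same technique lifts to Theorem~\ref{Thm_two_Rx} should therefore be qualified: the OR-product/graph-entropy route gives only $R^*_0\le H_{G_1\cup G_2}(X,Y)$ there, and graph entropy of a union is merely subadditive, so this can exceed $\max\{H(Y|X),H(X|Y)\}$; the two-receiver result genuinely needs $G_i(n)=G_i^{\wedge n}$ and the complementary-graph-entropy formalism. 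For the single-receiver lemma under review, however, your argument is complete and valid.
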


To prove Lemma~\ref{Thm_one_Rx}, we first prove some claims.
The graph that we use to prove Lemma~\ref{Thm_one_Rx}, is a special case 
of the graph $G_{XY}^{Z_1Z_2}$ defined in Section~\ref{sec_graph_def}, 
obtained by setting $Z_1= Y$ and $Z_2 =\emptyset$. For simplicity, let us denote
this graph by $G$. 
Graph $G$ has its vertex set $S_{XY}$, and
two vertices $(x_{1}, y_{1})$ and $(x_{2}, y_{2})$ 
are adjacent if and only if $x_1=x_2$ and $ y_1 \neq y_2$.
Similarly, we can obtain  the $n$-instance graph for this problem
from Definition~\ref{Def_graph_multi}. For simplicity,
this graph is denoted by $G(n)$.

It is easy to observe that $G$ is  the disjoint union of complete 
row graphs $G_i$
for $i = 1,2, \ldots, |\cX| $, where each $G_i$ 
has vertex set $\{(x_i,y): (x_i,y) \in S_{XY}\}$.

\begin{claim}
\label{Lem_multi_instance}
 For any $n$, the decoder can recover $Y^n$ with zero error if and only if $\phi$ is a 
 coloring of $G(n)$.
\end{claim}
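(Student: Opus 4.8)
The plan is to prove both directions of the equivalence directly from the adjacency structure of $G(n)$. Recall that with $Z_1 = Y$ and $Z_2 = \emptyset$, Definition~\ref{Def_graph_multi} makes two vertices $(x^{n}, y^{n})$ and $(x'^{n}, y'^{n})$ of $G(n)$ adjacent precisely when $x^{n} = x'^{n}$ and $y^{n} \neq y'^{n}$: the first rule with $f = Y$ reduces to $x^n=x'^n$ and $y_i \neq y'_i$ for some $i$, while the second rule is vacuous since $g$ is constant. The structural fact I would extract first is that, for each fixed $x^{n}$, the set of vertices $\{(x^{n}, y^{n}) \in S_{X^nY^n}\}$ forms a clique in $G(n)$, because any two of them share the same $x^{n}$ and differ in $y^{n}$. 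Thus $G(n)$ is the disjoint union, over $x^n$, of cliques on the admissible $y^n$'s.

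For the ``only if'' direction I would argue by contradiction. Suppose the decoder recovers $Y^{n}$ with zero error but $\phi$ is not a proper coloring, so there are adjacent vertices $(x^{n}, y^{n})$ and $(x^{n}, y'^{n})$ with $y^{n} \neq y'^{n}$ and $\phi(x^{n}, y^{n}) = \phi(x^{n}, y'^{n})$. Both source pairs then produce the same message, and the decoder sees the same side information $x^{n}$, so $\psi_1$ is forced to return the same value on both inputs. That single value cannot equal both $y^{n}$ and $y'^{n}$, contradicting zero-error recovery for at least one of the two pairs. Hence adjacent vertices must receive distinct messages, i.e. $\phi$ is a coloring of $G(n)$.

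For the ``if'' direction, suppose $\phi$ is a proper coloring. By the clique observation, for each fixed $x^{n}$ the map $y^{n} \mapsto \phi(x^{n}, y^{n})$ is injective on $\{y^{n} : (x^{n}, y^{n}) \in S_{X^nY^n}\}$. I would then define the decoder $\psi_1(m, x^{n})$ to output the unique $y^{n}$ with $\phi(x^{n}, y^{n}) = m$, and arbitrarily when no such $y^{n}$ exists (a case that never arises on transmitted pairs). For every $(x^{n}, y^{n}) \in S_{X^nY^n}$ this decoder returns $y^{n}$ exactly, so $Y^n$ is recovered with zero error.

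I do not anticipate a substantive obstacle: the entire content is the translation between ``$\phi$ assigns the same color to a same-$x^n$ pair'' and ``the decoder is forced into a collision.'' The only point needing a line of care is confirming that the second adjacency rule of Definition~\ref{Def_graph_multi} contributes no edges once $Z_2 = \emptyset$, so that adjacency in $G(n)$ is exactly the relation $x^n=x'^n,\ y^n\neq y'^n$ used in both directions.
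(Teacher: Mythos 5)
Your proof is correct and follows essentially the same route as the paper: the paper compresses the argument into a chain of equivalences (zero-error decodability $\Leftrightarrow$ $\phi$ separates same-$x^n$ pairs with different $y^n$ $\Leftrightarrow$ $\phi$ separates adjacent vertices of $G(n)$ $\Leftrightarrow$ $\phi$ is a coloring), while you unpack the two directions explicitly with the contradiction argument and the explicit decoder construction. No gap; the extra care about the vacuity of the second adjacency rule when $Z_2=\emptyset$ is a correct and harmless elaboration.
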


\begin{proof}
 The decoder can recover $Y^n$ with zero error $\Leftrightarrow$ 
 for any $(x^n,y^n),(x^n,y'^n) \in S_{X^nY^n}$ with $y^n \neq y'^n$, $\phi(x^n,y^n) \neq \phi(x^n,y'^n)$
 $\Leftrightarrow$ for any $((x^n,y^n),(x^n,y'^n)) \in E(G(n))$,  $\phi(x^n,y^n) \neq \phi(x^n,y'^n)$
 $\Leftrightarrow$ $\phi$ is a coloring of $G(n)$.
\end{proof}

%

In the following claim, we identify the vertices of
$G(n)$ with the vertices of $G^{\wedge n}$ by identifying 
$(x^n,y^n)$ with $((x_1,y_1), \ldots, (x_n,y_n))$.

\begin{claim}
\label{Lem_graph_eq}
 $G(n) = G^{\wedge n}$.
\end{claim}

\begin{proof}
For both the graphs, $(x^n,y^n)$ is a vertex if and only if
$p(x_i,y_i) > 0$ for all $i$.
Thus both the graphs have the same vertex set.

Next we show that both the graphs have
the same edge set.
Suppose $(x^n,y^n), (x'^n,y'^n) \in S_{X^nY^n}$ are two distinct pairs.
$\left( (x^n,y^n), (x'^n,y'^n) \right) \in E(G(n))$ $\Leftrightarrow$ $x^n = x'^n$ and  $y^n \neq y'^n$ 
$\Leftrightarrow$  $x_i = x'_i$ for all $i$, and
$y_j \neq y'_j$ for some $j$ $\Leftrightarrow $ for each $i$, either $(x_i,y_i) =(x'_i,y'_i)$ or
$\left( (x_i,y_i),(x'_i,y'_i) \right) \in E(G)$
$\Leftrightarrow \left(((x_1,y_1), \ldots, (x_n,y_n)),((x'_1,y'_1), \ldots, (x'_n,y'_n)) \right) \in E(G^{ \wedge n})$. 
This shows that $G(n) = G^{\wedge n}$.
\end{proof}
\begin{claim}
\label{Lem_AND_multi}
 $R^{*}_{0} = \bar{H}_G(X,Y)$. 
\end{claim}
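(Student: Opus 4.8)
The plan is to express the optimal zero-error variable-length rate $R_0^n$ for each block length in terms of the chromatic entropy of $G(n)$, and then to pass to the limit using the identity $G(n)=G^{\wedge n}$ from Claim~\ref{Lem_graph_eq} together with the asymptotic result of \cite{Rose_2003}. Throughout I treat $(X,Y)$ as the vertex random variable of the probabilistic graph $(G,(X,Y))$, and write $(X,Y)^n$ for $((X_1,Y_1),\ldots,(X_n,Y_n))$.

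First I would fix the block length $n$ and characterize $R_0^n$. By Claim~\ref{Lem_multi_instance}, a prefix-free encoder $\phi$ achieves zero error if and only if it is a proper coloring of $G(n)$, so every admissible encoder induces a coloring whose color classes are the preimages $\phi^{-1}(\cdot)$. For any such $\phi$, the prefix-free (Kraft) constraint gives $E[|\phi(X^n,Y^n)|] \geq H(\phi(X^n,Y^n)) \geq H_{\chi}(G(n),(X,Y)^n)$, where the last inequality is just the definition of chromatic entropy applied to $(G(n),(X,Y)^n)$. Hence $n R_0^n \geq H_{\chi}(G(n),(X,Y)^n)$. For the matching upper bound I would take a coloring $c^*$ attaining $H_{\chi}(G(n),(X,Y)^n)$ and compose it with an optimal (Huffman) prefix-free code for the color variable $c^*(X^n,Y^n)$. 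The resulting encoder is again a proper coloring of $G(n)$, hence zero-error, and its expected length is at most $H(c^*(X^n,Y^n))+1 = H_{\chi}(G(n),(X,Y)^n)+1$. Combining the two bounds yields
\[
\tfrac{1}{n}H_{\chi}(G(n),(X,Y)^n) \leq R_0^n \leq \tfrac{1}{n}H_{\chi}(G(n),(X,Y)^n) + \tfrac{1}{n}.
\]

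Finally I would substitute $G(n)=G^{\wedge n}$ from Claim~\ref{Lem_graph_eq}, turning the middle quantity into $\tfrac{1}{n}H_{\chi}(G^{\wedge n},(X,Y)^n)$, and let $n\to\infty$. The $1/n$ gap vanishes, and by \cite{Rose_2003} the limit of $\tfrac{1}{n}H_{\chi}(G^{\wedge n},(X,Y)^n)$ equals the complementary graph entropy $\bar{H}_G(X,Y)$ of $(G,(X,Y))$. Since $R^*_{0} = \lim_{n} R_0^n$ by definition, this gives $R^*_{0} = \bar{H}_G(X,Y)$.

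I expect the only delicate point to be the lower-bound step, where one must argue that minimizing expected prefix-free length over all zero-error encoders coincides with the chromatic-entropy minimization up to the additive constant: one has to check that restricting to encoders of the form ``coloring followed by a prefix code on colors'' loses no optimality. This follows because any zero-error $\phi$ is itself a coloring, and its own expected length is already lower bounded by the entropy of its output. The remaining pieces are the standard source-coding sandwich and a direct invocation of \cite{Rose_2003}.
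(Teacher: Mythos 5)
Your proof is correct and follows essentially the same route as the paper: the sandwich $\frac{1}{n}H_{\chi}(G(n),(X,Y)^n) \leq R_0^n \leq \frac{1}{n}H_{\chi}(G(n),(X,Y)^n) + \frac{1}{n}$ via Claim~\ref{Lem_multi_instance}, the substitution $G(n)=G^{\wedge n}$ from Claim~\ref{Lem_graph_eq}, and the limit identity of \cite{Rose_2003}. You merely spell out the Kraft/Huffman details that the paper leaves implicit.
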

\begin{proof}
 Claim~\ref{Lem_multi_instance} and the definition of chromatic entropy imply 
 that $ \frac{1}{n} H_{\chi}(G(n),(X^n,Y^n)) \leq R^n_{0} \leq \frac{1}{n} H_{\chi}(G(n),(X^n,Y^n))+ \frac{1}{n} $.
 Using Claim~\ref{Lem_graph_eq}, and taking limit, we get  $R^{*}_{0} 
= \lim\limits_{n \to \infty} \frac{1}{n} H_{\chi}(G^{ \wedge n},(X^n,Y^n))$.
Using \eqref{eq:gpentropy}, this implies $R^{*}_{0} = \bar{H}_G(X,Y)$.
\end{proof}

%


\begin{claim}
\label{Lem_graph_perfect}
$G$ is a perfect graph. 
\end{claim}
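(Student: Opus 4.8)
The plan is to exploit the structural decomposition of $G$ that has already been recorded above, namely that $G$ is the vertex disjoint union of the complete row graphs $G_i$, $i = 1, \ldots, |\cX|$. Once this decomposition is justified, perfection of $G$ reduces to two facts noted in Section~\ref{sec_graph_def}: that a complete graph is perfect, and that the vertex disjoint union of perfect graphs is again perfect. So the whole argument is short and structural, with no heavy computation.

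First I would confirm that each $G_i$ is genuinely a complete graph and that no edges run between distinct rows. By the adjacency rule defining $G$, two distinct vertices $(x_1,y_1)$ and $(x_2,y_2)$ are adjacent exactly when $x_1 = x_2$ and $y_1 \neq y_2$. Hence any two distinct vertices sharing the same first coordinate are adjacent, so each row $G_i$ is complete; and since the condition $x_1 = x_2$ fails for vertices in different rows, there are no cross-row edges. Therefore $G = \bigcup_i G_i$ is exactly a disjoint union of cliques. Next I would verify that a complete graph is perfect: any vertex induced subgraph $G'$ of a complete graph is itself complete, say on $k$ vertices, whence $\omega(G') = k$ (all $k$ vertices form a clique) and $\chi(G') = k$ (distinct vertices need distinct colors), so $\omega(G') = \chi(G')$ for every induced subgraph, which is the definition of perfection. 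Thus each $G_i$ is perfect.

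Finally, invoking the fact that the vertex disjoint union of perfect graphs is perfect, I conclude that $G = \bigcup_i G_i$ is perfect. I do not anticipate any real obstacle here; the only point needing care is the verification that the adjacency condition turns each row into a clique while leaving distinct rows mutually nonadjacent, and this follows immediately from the clause $x_1 = x_2$ in the definition of $G$.
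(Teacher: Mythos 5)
Your proof is correct and takes essentially the same route as the paper's: decompose $G$ into the disjoint complete row graphs $G_i$, note that complete graphs are perfect, and invoke closure of perfection under vertex disjoint union. You merely spell out the verifications (no cross-row edges, $\omega(G')=\chi(G')$ for induced subgraphs of a clique) that the paper leaves implicit.
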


\begin{proof}
As mentioned before, $G$ is disjoint union of complete graphs. 
Since a complete graph is a perfect graph,
it follows that $G$ is also a perfect graph.
\end{proof}

We now state a lemma from \cite{Korner_1986}.
\begin{lemma}\cite{Korner_1986}
\label{lem_graph_union}
 Let the connected components of the graph $A$ be subgraphs $A_i$. 
 Let $Pr(A_i) = \sum Pr(x), \; x \in V(A_i)$.
 Further, set 
 \begin{align*}
  Pr_i(x) & = Pr(x)[Pr(A_i)]^{-1}, \quad x \in V(A_i).
 \end{align*}
Then $H_A(X) = \sum_i Pr(A_i) H_{A_i}(X_i) $.
\end{lemma}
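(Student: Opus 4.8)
The plan is to establish the two inequalities $H_A(X) \ge \sum_i Pr(A_i)\,H_{A_i}(X_i)$ and $H_A(X) \le \sum_i Pr(A_i)\,H_{A_i}(X_i)$ separately, using the single structural fact that $A$ carries no edges between distinct components. This fact makes a set $W \subseteq V(A)$ independent in $A$ if and only if $W \cap V(A_i) \in \Gamma(A_i)$ for every $i$; equivalently, every independent set of $A$ splits uniquely as a disjoint union of independent sets of the components. Throughout I would let $J$ be the deterministic function of $X$ returning the index of the component containing $X$, so that conditioned on $J=i$ the source $X$ is distributed as $X_i$ with law $Pr_i$, and I write $W_i := W \cap V(A_i)$ for any independent set $W$ of $A$.

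For the lower bound, take an arbitrary feasible test channel $p_{W|X}$, so that $W \in \Gamma(A)$ and $X \in W$ almost surely. Since $J$ is a function of $X$, the chain rule gives $I(X;W) = I(J;W) + I(X;W \mid J) \ge I(X;W \mid J) = \sum_i Pr(A_i)\, I(X;W \mid J=i)$. Because $W_i$ is a deterministic part of $W$, discarding the remaining coordinates only lowers information, so $I(X;W \mid J=i) \ge I(X;W_i \mid J=i)$. The point is that, conditioned on $J=i$, the variable $W_i$ is an independent set of $A_i$ that contains $X$, so $p_{W_i \mid X, J=i}$ is itself an admissible test channel for the component problem; hence $I(X;W_i \mid J=i) \ge H_{A_i}(X_i)$. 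Summing yields the lower bound, uniformly over all feasible $p_{W|X}$, and therefore for the minimizing channel.

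For the upper bound I would build a single test channel on $A$ out of optimal channels on the components. Let $p_{W_i|X_i}$ achieve $H_{A_i}(X_i)$ and let $P_{W_i}$ denote the induced marginal. Given $X$ in component $J$, I generate $W_J$ from the optimal channel applied to $X$, and for every $i \neq J$ I draw an independent ``decoy'' $W_i \sim P_{W_i}$, setting $W = \bigcup_i W_i$; this $W$ is independent in $A$ and contains $X$. The key computation is that, conditioned on $J=i$, the tuple $(W_1,\dots,W_k)$ has the product law $\prod_i P_{W_i}$ irrespective of $i$, so the $W_i$ are mutually independent and jointly independent of $J$. Consequently $H(W) = \sum_i H(W_i)$, while $H(W \mid X)$ splits into the within-component term plus the decoy entropies; subtracting and collecting terms collapses everything to $I(X;W) = \sum_i Pr(A_i)\,H_{A_i}(X_i)$, which upper-bounds $H_A(X)$.

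The main obstacle is precisely this upper-bound construction. The naive choice $W = W_J$ fails: because a nonempty independent set living inside a single component reveals $J$, it forces an extra additive $H(J) = I(J;W)$ term and overshoots the target by the entropy of the component index. Padding with independent decoys in the other components is what ``hides'' $J$ and renders $(W_1,\dots,W_k)$ unconditionally independent; verifying this independence, together with the consistency of the decoy marginals with the overall marginal of each $W_i$, is the crux, after which the claimed identity follows by routine entropy bookkeeping matched to the lower bound.
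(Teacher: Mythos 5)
The paper offers no proof of this lemma at all --- it is imported verbatim from K\"orner's 1986 paper as a cited result --- so there is no in-paper argument to compare yours against. Your proof, however, is correct and is essentially the standard argument for additivity of graph entropy over connected components. Both halves check out: for the lower bound, the chain $I(X;W)=I(J;W)+I(X;W\mid J)\ge \sum_i Pr(A_i)\,I(X;W_i\mid J=i)$ is valid because $J$ is a function of $X$ and $W_i=W\cap V(A_i)$ is a function of $W$, and conditioned on $J=i$ the pair $(X,W_i)$ is indeed an admissible test channel for $(A_i,Pr_i)$ since $X\in W$ forces $X\in W_i$. For the upper bound, your decoy construction does what you claim: given $J=i$ the true $W_i$ has marginal $P_{W_i}$ and is independent of the decoys, so $(W_1,\dots,W_k)$ has the fixed product law $\prod_j P_{W_j}$, is independent of $J$, and the bookkeeping $I(X;W)=\sum_i Pr(A_i)\bigl(H(W_i)-H(W_i\mid X_i)\bigr)$ collapses to the claimed sum. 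You are also right that the padding is not optional --- the naive $W=W_J$ leaks $J$ and costs an extra $H(J)$ --- and correctly identify the disjointness of the $V(A_i)$ as the only structural input (it gives both $\Gamma(A)=\prod_i\Gamma(A_i)$ and the recoverability of each $W_i$ from $W$). The one cosmetic caveat is that the paper's definition of graph entropy restricts $W$ to nonempty independent sets containing $X$, so your decoys should be drawn from the induced (nonempty) marginals $P_{W_i}$, which is what you do; with that understood, the proof is complete.
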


We now prove Lemma~\ref{Thm_one_Rx}.


{\em Proof of Lemma~\ref{Thm_one_Rx}:}
 For any perfect graph $A$, it is known 
 that $\bar{H}_A(X) = H_A(X)$ \cite{Csiszár_1990,Simonyi_2001}.
 So  Claims~\ref{Lem_AND_multi}
 and \ref{Lem_graph_perfect} imply that $R^{*}_{0} = H_G(X,Y)$. 
 We now use Lemma~\ref{lem_graph_union} to compute $H_G(X,Y)$.
 Recall that each connected component of graph $G$ is a complete graph,
 and the connected component $G_i$, for each $i$, has vertex set 
$\{(x_i,y): (x_i,y) \in S_{XY}\}$
 and $Pr(G_i) = Pr(x_i)$. So we can set the probability of each vertex $(x_i,y) \in G_i$ as
 $Pr(x_i,y)/Pr(x_i)$.
 Since all the vertices in $G_i$ are connected,
 we get $H_{G_i}(x_i,Y) = H(Y|X=x_i)$. Then by using Lemma~\ref{lem_graph_union},
 we get $ H_G(X,Y) = H(Y|X)$.
 This completes the proof of Lemma~\ref{Thm_one_Rx}.
\hfill{\rule{2.1mm}{2.1mm}}


Now let us consider the complementary delivery problem shown in Fig~\ref{Two_Rx}. 
This is a special case of the problem shown in Fig.~\ref{Broadcast_network} with
$Z_1 =Y$ and $Z_2 =X$. 
In this case, the $Z_1Z_2$-modified rook's graph
$G_{XY}^{YX}$  has its 
vertex set $S_{XY}$, and two vertices $(x_{1}, y_{1})$ and $(x_{2}, y_{2})$ 
are adjacent if and only if either $x_1=x_2$ and $ y_1 \neq y_2$, or
$y_1=y_2$ and $ x_1 \neq x_2$. 
Now onwards, we denote $G_{XY}^{YX}$ and the $n$-instance graph
$G_{XY}^{YX}(n)$ by $G$  and $G(n)$ respectively.




We now state a Theorem from \cite{Tuncel_2009} which is used to prove
Theorem~\ref{Thm_two_Rx}.

\begin{theorem}\cite{Tuncel_2009}
\label{Thm_and_union}
Let $\cG = (G_1,\ldots, G_k)$ be a family of graphs on the same vertex set.
If $R_{\min}(\cG, P_X) := \lim\limits_{n \to \infty} \frac{1}{n} \left( H_{\chi}(\bigcup_i G_i^{\wedge n}, P_X^n) \right)$,
then $R_{\min}(\cG, P_X) = \max\limits_{i} R_{\min}(G_i,P_X)$ where $R_{\min}(G_i,P_X) = \bar{H}_{G_i}(X)$.
\end{theorem}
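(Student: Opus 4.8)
The plan is to establish the two matching inequalities $R_{\min}(\cG,P_X)\ge \max_i \bar H_{G_i}(X)$ (converse) and $R_{\min}(\cG,P_X)\le \max_i \bar H_{G_i}(X)$ (achievability). First I would note that the limit defining $R_{\min}(\cG,P_X)$ exists: the coordinatewise product of proper colorings of $\bigcup_i G_i^{\wedge n}$ and $\bigcup_i G_i^{\wedge m}$ is a proper coloring of $\bigcup_i G_i^{\wedge(n+m)}$, so the chromatic entropy is subadditive in the block length and Fekete's lemma gives convergence to the infimum. The converse is routine; the achievability — showing that the single hardest graph already dictates the rate — is where the real difficulty lies.

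For the converse, fix any index $j$. Since $\bigcup_i G_i^{\wedge n}$ is obtained from $G_j^{\wedge n}$ by adding edges (all graphs share the vertex set $V(G)^n$), every proper coloring of $\bigcup_i G_i^{\wedge n}$ is in particular a proper coloring of $G_j^{\wedge n}$. Hence the feasible set in the definition of $H_\chi(\bigcup_i G_i^{\wedge n},P_X^n)$ is contained in that of $H_\chi(G_j^{\wedge n},P_X^n)$, and minimizing $H(c(X^n))$ over a smaller family can only increase it:
\[
H_\chi\Big(\bigcup_i G_i^{\wedge n},P_X^n\Big)\ \ge\ H_\chi\big(G_j^{\wedge n},P_X^n\big).
\]
Dividing by $n$, letting $n\to\infty$, and invoking \eqref{eq:gpentropy} on the right gives $R_{\min}(\cG,P_X)\ge \bar H_{G_j}(X)=R_{\min}(G_j,P_X)$; taking the maximum over $j$ yields the converse.

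For achievability I would work on the typical set $T^n_{P_X,\epsilon}$ (coloring the vanishing-probability atypical sequences injectively costs only $o(n)$ rate) and try to build a single coloring that is simultaneously proper for every $G_i^{\wedge n}$ at rate $\max_i\bar H_{G_i}(X)$. The naive product of the per-graph optimal colorings is simultaneously proper but has rate $\sum_i\bar H_{G_i}(X)$, far too expensive. A tempting alternative is \emph{block delegation}: split a long sequence into super-blocks, designate a fraction $\lambda_i$ of them to carry the optimal $G_i^{\wedge n}$-coloring, and output the tuple of block colors; its rate is the convex combination $\sum_i\lambda_i\bar H_{G_i}(X)$, which can even be pushed below $\max_i\bar H_{G_i}(X)$. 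That this would contradict the converse just proved is exactly the warning sign: block delegation is \emph{not} a valid zero-error coloring, because two distinct sequences may coincide on all super-blocks assigned to some $G_i$ while being $G_i^{\wedge n}$-adjacent elsewhere, receiving the same color yet remaining adjacent in the union. This is precisely the zero-error obstruction flagged in the Remark preceding Theorem~\ref{Thm_two_Rx}: unlike $\epsilon$-error binning, no coordinate may be neglected.

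The hard part is therefore to repair these coincidence collisions without inflating the rate back to the sum, and this is the technical heart of \cite{Tuncel_2009}. My plan would be to abandon single-shot constructions and instead characterize $R_{\min}(\cG,P_X)$ through the typical-set chromatic number $\tfrac1n\log\chi\big(\bigcup_i G_i^{\wedge n}(T^n_{P_X,\epsilon})\big)$, whose per-graph limit equals $\bar H_{G_i}(X)$ by \eqref{eq:gpentropy}. The goal is to cover $T^n_{P_X,\epsilon}$ by sets that are simultaneously independent in every $G_i^{\wedge n}$ using only $2^{n(\max_i\bar H_{G_i}(X)+o(1))}$ of them. I expect the main obstacle to be exactly the collision events above: one must show that a covering tuned to the hardest graph can be refined to respect all the remaining graphs while the number of colors stays first-order equal to the maximum, i.e.\ that the cost of separating the residual adjacent pairs is asymptotically absorbed into the $o(n)$ term. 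Establishing this rate-negligibility, rather than the easy converse, is where the argument of \cite{Tuncel_2009} does its work.
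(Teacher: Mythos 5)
Your converse half is correct and complete: since $E(G_j^{\wedge n})\subseteq E\left(\bigcup_i G_i^{\wedge n}\right)$ for every $j$, each proper coloring of the union is a proper coloring of $G_j^{\wedge n}$, so $H_{\chi}\left(\bigcup_i G_i^{\wedge n},P_X^n\right)\geq H_{\chi}\left(G_j^{\wedge n},P_X^n\right)$, and normalizing and invoking \eqref{eq:gpentropy} gives $R_{\min}(\cG,P_X)\geq \max_j \bar{H}_{G_j}(X)$. Your Fekete argument for existence of the limit is also sound, since concatenating proper colorings of the two halves yields a proper coloring of each $G_i^{\wedge(n+m)}$ and entropy is subadditive for the i.i.d.\ source. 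Your diagnosis of why the product coloring and ``block delegation'' fail is likewise accurate.

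However, the achievability direction --- which is the entire content of the theorem --- is not proved; it is only described. Your final paragraph restates the goal (cover $T^n_{P_X,\epsilon}$ by sets simultaneously independent in every $G_i^{\wedge n}$ using $2^{n(\max_i\bar{H}_{G_i}(X)+o(1))}$ colors) and names the obstacle (absorbing the cost of residual collisions into the $o(n)$ term) without supplying any construction, counting argument, or reduction, so the statement remains unestablished. Moreover, the actual argument in \cite{Tuncel_2009} does not take the route you anticipate of refining a covering tuned to the hardest graph: it proceeds via Marton's identity $\bar{H}_{G}(X)=H(X)-C(\bar{G},P_X)$, which expresses complementary graph entropy through the Shannon capacity of the complement graph relative to the distribution $P_X$, and then applies the Gargano--K\"{o}rner--Vaccaro compound-capacity theorem, which says the zero-error capacity of a family of graphs equals $\min_i C(\bar{G}_i,P_X)$; combining the two yields $R_{\min}(\cG,P_X)=H(X)-\min_i C(\bar{G}_i,P_X)=\max_i\bar{H}_{G_i}(X)$. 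In other words, the hard combinatorics is delegated to a pre-existing deep result (the Sperner-capacity machinery behind the compound-capacity theorem), not resolved by an $o(n)$ repair of colorings, and your plan would effectively have to rediscover it. Note finally that the paper under review gives no proof of this statement at all --- it imports it wholesale from \cite{Tuncel_2009} --- so your converse, while correct, duplicates the easy direction and leaves the cited theorem doing all the work.
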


We are now ready to prove Theorem~\ref{Thm_two_Rx}.

{\em Proof of Theorem~\ref{Thm_two_Rx}:}
For $i=1,2$, let $G_i$ be the modified rook's graphs corresponding to 
decoding with side information at decoder $i$. 
So the modified rook's graph for
the problem with two decoders is given by $G= G_1 \bigcup G_2 $.
Two vertices $(x^n,y^n)$ and $(x'^n,y'^n)$ are connected in the
corresponding $n$ instance graph $G(n)$ if and only if they are connected
either in $G_1(n)$ or in $G_2(n)$. This implies that
$G(n)= G_1(n) \bigcup G_2(n)$. 
This shows that both the decoders can decode with zero error if and only if $\phi$
is a coloring of $G(n)$. This fact and the definition of chromatic entropy imply that
$R^{*}_{0} = \lim\limits_{n \to \infty} \frac{1}{n} H_{\chi} \left(G(n),(X^n,Y^n) \right)$.
From Claim~\ref{Lem_graph_eq}, it follows that
$G(n)=G^{\wedge n}_1 \bigcup G^{\wedge n}_2$. Then by using Theorem~\ref{Thm_and_union}, we get
$R^{*}_{0} = \max\{ \bar{H}_{G_1}(X,Y), \bar{H}_{G_2}(X,Y) \} $.
As argued in the proof of Lemma~\ref{Thm_one_Rx}, $\bar{H}_{G_1}(X,Y)
=H(Y|X)$ and $\bar{H}_{G_2}(X,Y)=H(X|Y)$.
Thus $R^{*}_{0} =\max\{ H(Y|X), H(X|Y) \}$.
\hfill{\rule{2.1mm}{2.1mm}}

\subsection{Proof of Theorem~\ref{Thm_Many_Rx}}

The proof of Theorem~\ref{Thm_Many_Rx} follows 
by the same arguments as that of Theorem~\ref{Thm_two_Rx}, and is thus
omitted.

\subsection{Proof of Theorem~\ref{Broadcast_rate}}

%
\comment{When $Z_1,Z_2$ are compatible or $X,Y$ are  binary random variables,  
the optimal rate  $R^*$ is given in  part~(\ref{Broadcast_rate2}).}	
Lemma~\ref{lemma_rooks} below is used in the achievability proof of part~(\ref{Broadcast_rate2}).
\begin{lemma}
\label{lemma_rooks}
If $Z_1,Z_2$ are compatible such that $G_{XY}^{ZZ}=G_{XY}^{Z_1Z_2}$ for $Z=h(X,Y)$, then
 $H(Z_1|Z,X) = 0$ and  $H(Z_2|Z,Y) = 0$. As a consequence,
  $H(Z|X) = H(Z_1|X)$ and  $H(Z|Y) = H(Z_2|Y)$.
\end{lemma}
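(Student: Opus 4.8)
The plan is to unpack the hypothesis $G_{XY}^{ZZ}=G_{XY}^{Z_1Z_2}$ at the level of individual edges and translate the resulting combinatorial equivalences into the stated conditional-entropy identities. Throughout I use $Z_1=f(X,Y)$, $Z_2=g(X,Y)$, and $Z=h(X,Y)$ as in the definitions.

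First I would restrict attention to pairs of vertices $(x,y_1)$ and $(x,y_2)$ in $S_{XY}$ that share the same first coordinate $x$ (with $y_1\neq y_2$). For such a pair the second adjacency rule cannot apply in either graph, since it requires equal $y$-coordinates. Hence in $G_{XY}^{Z_1Z_2}$ the two vertices are adjacent iff $f(x,y_1)\neq f(x,y_2)$, whereas in $G_{XY}^{ZZ}$ they are adjacent iff $h(x,y_1)\neq h(x,y_2)$. Because the two graphs have identical edge sets, these two conditions are equivalent:
\begin{align*}
f(x,y_1)\neq f(x,y_2) \iff h(x,y_1)\neq h(x,y_2).
\end{align*}

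Next I would read off both contrapositive implications of this equivalence. The implication $h(x,y_1)=h(x,y_2)\Rightarrow f(x,y_1)=f(x,y_2)$ says that $Z_1$ is a deterministic function of $(Z,X)$, which is exactly $H(Z_1\mid Z,X)=0$. The reverse implication $f(x,y_1)=f(x,y_2)\Rightarrow h(x,y_1)=h(x,y_2)$ says that $Z$ is a deterministic function of $(Z_1,X)$, giving $H(Z\mid Z_1,X)=0$. An entirely symmetric argument, now applied to vertex pairs $(x_1,y)$ and $(x_2,y)$ sharing the same second coordinate and invoking the second adjacency rule, yields both $H(Z_2\mid Z,Y)=0$ and $H(Z\mid Z_2,Y)=0$. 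Finally I would assemble the consequence by expanding $H(Z,Z_1\mid X)$ in two ways via the chain rule,
\begin{align*}
H(Z\mid X)+H(Z_1\mid Z,X)=H(Z_1\mid X)+H(Z\mid Z_1,X),
\end{align*}
and substituting the two vanishing terms to obtain $H(Z\mid X)=H(Z_1\mid X)$; the symmetric computation gives $H(Z\mid Y)=H(Z_2\mid Y)$.

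The argument is essentially a direct decoding of the definitions, so there is no deep obstacle. The one point requiring care is to notice that graph \emph{equality} forces the equivalence in \emph{both} directions, not merely one implication: the two identities $H(Z_1\mid Z,X)=0$ and $H(Z_2\mid Z,Y)=0$ use only one direction, but the consequence about equal entropies additionally needs the other direction (equivalently, the facts $H(Z\mid Z_1,X)=0$ and $H(Z\mid Z_2,Y)=0$). Overlooking this would leave the entropy equalities as mere inequalities rather than equalities.
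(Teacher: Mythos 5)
Your proposal is correct and follows essentially the same route as the paper: it establishes the pointwise equivalences $f(x,y')=f(x,y'')\iff h(x,y')=h(x,y'')$ (and the symmetric one for $g$) from the graph equality, deduces the four vanishing conditional entropies, and concludes via the same two-way chain-rule expansion of $H(Z,Z_1\mid X)$. The only difference is presentational — you derive the key equivalences explicitly from the edge sets where the paper simply asserts them — and your closing remark about needing both directions of the equivalence is exactly the point the paper's proof also relies on.
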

\begin{proof}
 For any $(x,y)$ and $(x,y')$ , observe that 
 \begin{align}
  h(x,y) & = h(x,y') \iff f(x,y) = f(x,y') \label{fun_ZZ1}.
 \end{align}
 Similarly, for any $(x,y)$ and $(x',y)$,
 \begin{align}
  h(x,y) & = h(x',y) \iff g(x,y) = g(x',y) \label{fun_ZZ2}
 \end{align} 
For a given $X=x$ and $Z=h(X,Y)=z$, let us consider the set
of possible $y$, $A_{x,z}=\{y': h(x,y')=z \}.$

By \eqref{fun_ZZ1}, $f(x,y')=f(x,y'')\,\, \forall y',y''\in A_{x,z}$.
Thus, denoting this unique value by $z_1:=f(x,y')$, we have
$Pr\{Z_1=z_1|X=x,Z=z\}=1$. So we have
 $H(Z_1|Z,X)=0 $ and similarly $ H(Z_2|Z,Y)=0$.
Using similar lines of arguments, we get
  $H(Z|Z_1,X)=0$ and $H(Z|Z_2,Y)=0$.
 Then we get the following.
 \begin{align*}
 H(Z|X)   & = H(Z|X)+H(Z_1|Z,X)\\
         & = H(Z_1,Z|X)\\
         & = H(Z_1|X) + H(Z|Z_1,X)\\
         & = H(Z_1|X),
 \end{align*}
Similarly, we get  $H(Z|Y) = H(Z_2|Y)$.
\end{proof}


{\em Proof of part~(\ref{Broadcast_rate2}):} We first prove part~(\ref{Broadcast_rate2})~a).
 Converse for $R^*$ follows from the cut-set bound. Now let us consider the achievability of $R^*_{\epsilon}$.
 The encoder first computes $h(x^n,y^n)$ and then
uses Slepian-Wolf binning to compress it at a rate $\max(H(Z|X), H(Z|Y ))$. Then decoder 1 and 2
can compute $Z^n$ with negligible probability of error.
From Lemma~\ref{lemma_rooks}, it follows that encoder 1 can recover $Z_1^n$ from $Z^n$ and $X^n$.
Similarly, encoder 2 computes $Z_2^n$ from $Z^n$ and $Y^n$. From Lemma~\ref{lemma_rooks}, 
we have $\max(H(Z|X), H(Z|Y )) = \max ( H(Z_1|X), H(Z_2|Y) )$.
When $Z_1=Z_2=Z$, from
the above arguments it is easy
to see that $\max(H(Z|X), H(Z|Y))$ is achievable.

Now let us consider part~(\ref{Broadcast_rate2})~b).
Here also converse for $R^*_{\epsilon}$ follows from the cut-set bound. For achievability, let us consider
$G_{XY}^{Z_1Z_2}$. When $X,Y$ are binary random variables, any $G_{XY}^{Z_1Z_2}$ is a subgraph 
of the ``square'' graph with four edges. When $S_{XY} = \cX \times \cY$, if graph $G_{XY}^{Z_1Z_2}$
has one edge then $Z_1,Z_2$ are not compatible. 
It can be checked that any other possible graph $G_{XY}^{Z_1Z_2}$ is compatible.
For those compatible graphs, the proof follows from part~(\ref{Broadcast_rate2})~a).
For a graph with only one edge, w.l.o.g., let us consider the graph shown in Fig.~\ref{Mod_Rooks2}.
It is clear that $H(Z_2|Y) =0$ and so decoder 2 can recover $Z_2$ only from $Y$. 
For decoder 1, we need an encoding rate $R=H(Z_1|X)$. Thus the rate $\max(H(Z_1|X), H(Z_2|Y ))=H(Z_1|X)$
is achievable.
\hfill{\rule{2.1mm}{2.1mm}}
%

Before proving part~(\ref{Broadcast_rate3}) of Theorem~\ref{Broadcast_rate},
we present a useful lemma.
\begin{lemma}
\label{lemma2_rooks}
 Let $ W \in \Gamma(G_{XY}^{Z_1Z_2})$ be a random variable such that $(X,Y) \in W $.
 Then $H(Z_1|W,X)=0$ and $H(Z_2|W,Y)=0$.
\end{lemma}
\begin{proof}
 Since $w$ is an independent set of $G_{XY}^{Z_1Z_2}$, for each $x\in \cX$, $f(x,y') = f(x,y'')$
 for all $(x,y'),(x,y'') \in w$. So decoder 1 can compute $f(x,y)$
 from $(w,x)$ whenever $p(w,x,y)>0$. Similarly, decoder 2 can compute $g(x,y)$ from $(w,y)$ 
 whenever $p(w,x,y)>0$. This implies that $H(Z_1|W,X)=0$ and $H(Z_2|W,Y)=0$.
\end{proof}

Given $x$ and independent set $w$, since the value of $z_1$ is unique, this unique 
value is denoted by $z_1(w,x)$ with abuse of notation.

%
%
%
{\em Proof of part~(\ref{Broadcast_rate3}):}
First we prove $ R^*_{\epsilon} \leq R_I$.
Let $U$ be a random variable such that it satisfies the conditions of $R_I$ in part~(\ref{Broadcast_rate3}).

{\em Generation of codebooks:} 
Let  $\{ U^n(l) \},l \in [ 1:{2^{n\tilde{R}}}]$, be a set of sequences, each chosen i.i.d. according
to $\prod_{i=1}^n p_{U}(u_{i})$.
Partition the set of sequences $U^n(l)$, $l \in [1:2^{n\tilde{R}} ]$,  into equal-size 
bins, $B(m) = [(m-1)2^{n(\tilde{R}-R)}+1: m 2^{n(\tilde{R}-R)}]$, where $m \in [1: 2^{nR}]$.

{\em Encoding:} 
Given $(x^n,y^n)$, the encoder finds an index $l$ such that  $(x^n,y^n,u^n(l)) \in T_{\epsilon}^n(X,Y,U)$.
If there is more than one such index, it selects one of them uniformly at random. If there is no
such index, it selects an index uniformly at random from $[1:{2^{n\tilde{R}}}]$. The encoder sends
the bin index $m$ such that $l \in B(m)$. 

{\em Decoding:} 
Once decoder 1 receives the message from the encoder, it finds the unique index $\hat{l} \in B(m)$
such that $(x^n,u^n(\hat{l})) \in  T_{\epsilon}^n(X,U)$. If there is no unique $\hat{l} \in B(m)$,
it sets $\hat{l}=1$.
It then computes the function values $z_{1i}$ as $\hat{z}_{1i} = z_{1i}(u_i(\hat{l}), x_i)$ for $i \in [1;n]$.
Decoder 2 operates similarly.

{\em Analysis of error:} Let $(L,M)$ denote the chosen codeword and bin indices at encoder and let $\hat{L}$ be the
index estimate given by decoder 1. 
Decoder 1 makes an error if and only if the following event $\cE_1$ happens.
\begin{align*}
\cE_1 & = \{ (U^n(\hat{L}), X^n, Y^n) \notin  T_{\epsilon}^n)\}
\end{align*}
Event $\cE_1$ happens only if one of the following events happens.
\begin{align*}
\cE_{11} & = \{ (U^n(l), X^n, Y^n) \notin  T_{\epsilon'}^n) \mbox{ for all } l \in [1: 2^{n\tilde{R}}] \}\\
\cE_{12} & = \{  \exists \; \tilde{l} \neq L \mbox{ such that } \tilde{l}  \in B(M), (U^n(\hat{l}), X^n) \in  T_{\epsilon}^n\}  
\end{align*}
Under $\cE_{11}^c$, if $ \hat{L} = L $, then decoder 1 can compute
$Z_1^n$ with no error. 
The probability of error for decoder 1 is upper bounded as $ P(\cE_1) \leq  P(\cE_{11})+P(\cE_{12}).$

By covering lemma \cite{Elgamal_Kim}, $P(\cE_{11}) \rightarrow 0 $ as $n \rightarrow \infty $ if
$\tilde{R} > I(X,Y;U) + \delta(\epsilon')$.  $P(\cE_{12})$ is the same as the probability of error $P(\cE_{3})$
in {\cite [Lemma 11.3] {Elgamal_Kim}} if we replace $Y^n$ with $X^n$. By packing lemma, $P(\cE_{12}) \rightarrow 0$ if
$\tilde{R}-R < I(X;U) - \delta(\epsilon) $. Combining these two bounds, we get $ P(\cE_1) \rightarrow 0$
as $n \rightarrow \infty $ if $R> I(X,Y; U) - I(X;U) + \delta(\epsilon)+\delta(\epsilon')$.
This shows that any rate $R > I(U;Y|X)$ is achievable for decoder 1.

Similarly for decoder 2, any rate $R > I(U;X|Y)$ is achievable under the same encoding. So we get that $R > \max ( I(X;U|Y), I(Y;U|X) )$
is an achievable rate. 
Now we show that $R_O \leq R^*_{\epsilon}$. 
\begin{align*}
 nR & \geq H(M)\\
 & \geq H(M|X^n)\\
 &= I(M;Y^n|X^n) \; ( M  \mbox{ is a function of } (X^n,Y^n))  \\
  & \stackrel{(a)}{\geq} \sum_{i=1}^{n} H(Y_{i}|X_i) -\sum_{i=1}^{n} H(Y_{i}|Y^{i-1},X_i,X^{i-1},M)\\
 & = \sum_{i=1}^{n} I (Y_i;V_i|X_i) \mbox{ (where }V_i = (M,X^{i-1},Y^{i-1})),
\end{align*}
where $(a)$ follows from the fact that conditioning reduces entropy.
Now defining a timesharing random variable $Q,V=(V_Q,Q), 
X_{Q} = X$ and $Y_{Q} = Y$; and using support lemma, the result follows.
\comment{
Let $Q$ be such that $Pr(Q=i) = \frac{1}{n}$. Then we get
\begin{align*}
 R & \geq \frac{1}{n} \sum_{i=1}^{n} I(Y_i;V_i|X_i,Q=i) =   I(Y_Q;V_Q|X_Q,Q)
\end{align*}
Since $Q$ is independent of $(X_Q,Y_Q)$,
\begin{align*}
I(Y_Q;V_Q|X_Q,Q) =  I(Y_Q;V_Q,Q|X_Q)
\end{align*}
By defining $V=(V_Q,Q), X_{Q} = X$ and $Y_{Q} = Y$, we get $R \geq I(Y;V|X)$.
By similar lines of argument, we have  $R \geq I(X;V|Y)$. From the support lemma,
we can show that $|\cV| \leq |\cX|.|\cY|+2$. 
}

\hfill{\rule{2.1mm}{2.1mm}}

%


\section*{Acknowledgment}
The work  was supported in part by the Bharti Centre for Communication, IIT Bombay,
a grant from the Information Technology Research Academy, Media Lab Asia, to IIT Bombay,
and a grant from the Department of Science \& Technology to IIT Bombay.



\begin{thebibliography}{100} 


\bibitem{Tuncel_2006}
E.~Tuncel, ``Slepian-Wolf coding over broadcast channels,'' 
\emph{IEEE Transactions on Information Theory}, vol.~52, no.~4, pp.~1469-1482, Apr.~2006.

\bibitem{Wu_2007}
Y.~Wu, ``Broadcasting when receivers know some messages a priori,'' in
\emph{Proc. IEEE International Symposium onInformation Theory}, Nice, France, Jun.~2007.

\bibitem{Kramer_2007}
G.~Kramer and S.~Shamai, ``Capacity for classes of broadcast channels with receiver side information,'' in 
\emph{Proc. IEEE Information Theory Workshop}, California, USA, Sep.~2007.

\bibitem{Kimura_2007}
A.~Kimura, T.~Uyematsu, and S.~Kuzuoka, ``Universal coding for correlated sources with complementary delivery,''  
\emph{IEICE Transactions Fundamentals}, vol.~E90-A, no.~9, pp.~1840--1847, Sep.~2007.


\bibitem{Timo_2013}
R.~Timo, A.~Grant, and G.~Kramer, ``Lossy broadcasting with complementary side information,''  
\emph{IEEE Trans. Inf. Theory}, vol.~59, no.~1, pp.~104-131, Jan.~2013.

 \bibitem{Yossef_2011}
Z.~Bar-Yossef, Y.~Birk, T.~ S.~Jayram, and T.~Kol, ``Index coding with side
information,'' 
\emph{IEEE Trans. Inf. Theory}, vol.~57, no.~3, pp.~1479-1494, Mar.~2011.

\bibitem{Alon_2008}
N.~Alon, E.~Lubetzky, U.~Stav, A.~Weinstein, and A.~Hassidim, ``Broadcasting with side information,'' 
in \emph{ 49th Ann. IEEE Symp. Found. Comput. Sci.}, Philadelphia, PA, Oct.~2008, pp. 823-832.

\bibitem{Effros_2015}
M.~Effros, S. El Rouayheb, and M.~Langberg, ``An Equivalence Between Network Coding and Index Coding'' 
\emph{IEEE Trans. Inf. Theory}, vol.~61, no.~5, pp.~2478-2487, May.~2015.

\bibitem{Kim_2015}
F.~Arbabjolfaei and Y.~H.~Kim, ``Structural properties of index coding capacity using fractional graph theory,'' in
\emph{Proc. IEEE International Symposium onInformation Theory}, Hong Kong, Jun. 2015.

\bibitem{Maleki_2014}
H.~Maleki, V.~R.~Cadambe, and S.~A.~Jafar, ``Index coding an interference alignment perspective,''
\emph{IEEE Trans. Inf. Theory}, vol.~60, no.~9, pp.~5402–5432, Sep.~2014.



\bibitem{Miyake_2015}
S.~Miyake and J.~Muramatsu, ``Index Coding over Correlated Sources,'' in
\emph{Proc. IEEE International Symposium on Network Coding}, Sydney, Australia, Jun. 2015.

\bibitem{Langberg_2011}
M.~Langberg and M.~Effros, ``Network coding: Is zero error always possible?'' 
in \emph{Proc. 49th Ann. Allerton Conf. Comm. Control Comput.}, Monticello, IL, Sep. 2011, pp.~1478-1485.


 
%
%


 

%
%
%
%


%


\bibitem{Korner_1986}
J.~K\"{o}rner, ``Fredman-Koml\'{o}s bounds and information theory'',
\emph{SIAM J. Algebraic and Discrete Methods}, vol.~7, no.~4, pp.~560-570, Oct.~1986.


\bibitem{Ravi_2014}
J.~Ravi and B.~K.~Dey, ``Zero-error function computation through a bidirectional relay," in
\emph{Proc. IEEE ITW}, Jerusalem, Apr. 2015.

\bibitem{Elgamal_Kim}
A.~El Gamal and Y.~H.~Kim, \emph{Network Information Theory,}
Cambridge, U.K, Cambridge Univ. Press, 2011.

\bibitem{Witsen_1976}
L.~H.~Witsenhausen, ``The zero-error side information problem and chromatic numbers,''  
\emph{IEEE Transactions on Information Theory}, vol.~22, no.~5, pp.~592--593, Jan.~1976.

\bibitem{Alon_1996}
N.~Alon and A.~Orlitsky, ``Source coding and graph entropies,''  
\emph{IEEE Transactions on Information Theory}, vol.~42, no.~5, pp.~1329--1339, Sept.~1996.

\bibitem{Rose_2003}
P.~Koulgi, E.~Tuncel, S.~L.~Regunathan, and K.~Rose, ``On zero-error source coding with decoder side information,''    
\emph{IEEE Transactions on Information Theory}, vol.~49, no.~1, pp.~99-111, Jan.~2003.

\bibitem{Tuncel_2009}
E.~Tuncel, J.~Nayak, P.~Koulgi, and K.~Rose, ``On Complementary Graph Entropy,''    
\emph{IEEE Transactions on Information Theory}, vol.~55, no.~6, pp.~2537-2546, Jun.~2009.




\bibitem{Simonyi_2001}
G.~Simonyi, ``Perfect graphs and graph entropy. An updated survey,''
\emph{J. Ramirez-Alfonsin, B. Reed (Eds.), Perfect Graphs}, pp.~293-328. John Wiley \& Sons, 2001.

\bibitem{Csiszár_1990}
I.~Csisz\'{a}r, J.~K\"{o}rner, L.~Lov\'{a}sz, K.~Marton, and G.~Simonyi,
``Entropy splitting for antiblocking corners and perfect graphs,''
\emph{Combinatorica}, vol.~10, no.~1, 1990.








%


%
%

%
%
%
%
%
%
%


%
%
%
%

\end{thebibliography}
\end{document}